\newtheorem{proposition}{Proposition}
\newtheorem{corollary}{Corollary}
\newtheorem{lemma}{Lemma}
\newtheorem{definition}{Definition}
\newtheorem{remark}{Remark}
\newtheorem{assumption}{Assumption}
\newcommand{\F}[1]{\mathbb{F}_{\!#1}}
\newcommand{\cC}{\mathcal{C}}
\newcommand{\cD}{\mathcal{D}}
\newcommand{\ev}{{\mathbf{e}}}
\newcommand{\hv}{{\mathbf{h}}}
\newcommand{\pv}{{\mathbf{p}}}
\newcommand{\qv}{{\mathbf{q}}}
\newcommand{\yv}{{\mathbf{y}}}
\newcommand{\Hm}{{\mathbf{H}}}
\newcommand{\Imat}{{\mathbf{I}}}
\newcommand{\Pm}{{\mathbf{P}}}
\newcommand{\mat}[1]{\ensuremath{\boldsymbol{#1}}}
\newcommand{\un}{{\mat{1}}}
\newcommand{\zero}{{\mat{0}}}
\DeclarePairedDelimiterX\Kullc[2]{(}{)}{#1 \delimsize\vert\!\delimsize\vert #2}
\newcommand{\Kull}[2]{D\Kullc*{#1}{#2}}
\newcommand{\prob}{\mathbb{P}}
\newcommand{\BO}[1]{O\left(#1\right)}
\newcommand{\BOm}[1]{\Omega\left(#1\right)}
\newcommand{\eqdef}{\stackrel{\text{def}}{=}}
\newcommand{\Perr}{P_{\text{err}}}
\title{The decoding failure probability of MDPC codes}
\author{Jean-Pierre Tillich \footnote{Inria, SECRET Project, 2 Rue Simone Iff 75012 Paris Cedex, France, Email: \texttt{\{jean-pierre.tillich\}@inria.fr}. Part of this work was supported by the Commission of the European Communities through the Horizon 2020 program under project number 645622 PQCRYPTO.}}
\begin{document}
\maketitle

\begin{abstract}
Moderate Density Parity Check (MDPC) codes are defined here as codes which have a parity-check matrix whose row weight is $O(\sqrt{n})$ where $n$ is 
the length $n$ of the code. They can be decoded like LDPC codes but they decode much less errors than LDPC codes: the number of errors they can decode in this case is of order $\Omega(\sqrt{n})$.
Despite this fact they have been proved very useful in cryptography for devising key exchange mechanisms. They have also been proposed 
in  McEliece type cryptosystems. However in this case, the parameters that have been proposed in \cite{MTSB13} were broken in \cite{GJS16}. This attack exploits 
the fact that the decoding failure probability  is non-negligible. We show here that this attack can be thwarted by choosing the parameters in a more conservative way.
We first show that such codes can decode with a simple bit-flipping decoder any pattern of $O\left(\frac{\sqrt{n} \log \log n }{\log n}\right)$ errors.
This avoids the previous attack at the cost of significantly increasing the key size of the scheme. We then show that under a very reasonable assumption the decoding failure
 probability  decays almost exponentially with the codelength with just two iterations of bit-flipping. With an additional assumption it 
has even been proved that it decays exponentially with an unbounded number of iterations and we show that in this case the increase of the key size which is required 
for resisting to the  attack of \cite{GJS16} is only moderate.

\end{abstract}

\section{Introduction}
Virtually all the public key cryptography used in practice today can be attacked in polynomial time by a quantum computer \cite{S94a}.
Even if such a quantum computer does not exist yet, finding viable solutions which would be resistant to a quantum computer is expected to be a lengthy process. This is one of the reasons why the
 NIST has recently launched a process for standardizing public key cryptographic algorithms that would be safe against a 
quantum adversary. 
 Code-based cryptography is believed to be quantum resistant and is 
 therefore considered as a viable solution. 
The McEliece system \cite{M78} based on binary Goppa codes, which is almost as old as RSA, is a public key cryptosystem that 
falls into this category. It has withstood all cryptanalyses up to now. 
It is  well known to provide extremely fast encryption and 
fast decryption \cite{BS08,BCS13},
 but has large public keys, about 200 kilobytes
for 128 bits of security and slightly less than one megabyte for 256 bits of security \cite{BLP08}.

There have been many attempts to decrease the key size of this system. One of the most satisfying answer up to now has been to use 
 Moderate Density Parity Check (MDPC) codes. The rows of their parity-check that defines them is of order 
 $\BO{\sqrt{n}}$ when $n$ is the length of the code. This family is very attractive
 since (i) the decryption algorithm is extremely simple and is based on a very simple bit flipping decoding algorithm, (ii) direct attacks on the key really amount to a problem of the same nature as decoding a linear code. This can be used to give a security proof \cite{MTSB13}.
 
 This work builds upon the following observation: 
 decoding $w$ errors in a generic linear code $\cC$ or finding a codeword  of weight $w$ in the linear code where 
 this word has been added to $\cC$ are problems that are polynomially equivalent. This problem is considered very hard when $t$ or $w$ are large enough: after decades of active research
\cite{P62,LB88,L88,S88,D91,B97b,FS09,BLP11,MMT11,BJMM12,MO15,DT17,BM17} the best algorithms solving this issue
 are still exponential in $t$ (or $w$), their complexity is of the form
 $\frac{e^{t \alpha(R)(1+o(1))}}{N}$  (replace $t$ by $w$ in the low-weight search problem) where $N$ is the number of solutions of the problem and $R$ is the rate of the code. This holds even for algorithms in the quantum computing model \cite{B11,KT17a}. Moreover, the relative exponent $\alpha(R)$ has decreased only very slowly after $50$ years of active research on the topic.The proposal made in \cite{MTSB13} exploits this. It suggests to use MDPC codes of rate $\frac{1}{2}$ of a certain length $n$ in a McEliece scheme which are able to decode $t = \Theta(\sqrt{n})$ errors with parity-check equations of weight $w \approx t$. Recovering 
 the plaintext without knowing the secret parity-check matrix of the code amounts to decode $\Theta(\sqrt{n})$ errors 
 in a linear code which is conjectured to be hard \cite{A11}, whereas recovering the secret MDPC structure amounts to
 find codewords of roughly the same weight as $t$ in the dual code that has the same dimension as the code that is decoded. Both problems are equally hard here. Note that this is not the case if we would have taken LDPC codes. They can decode much larger errors, however in this case finding the low weight parity-checks can be done in polynomial time which breaks the system completely as observed in \cite{MRS00}.
 
 However there is a problem with the security proof of \cite{MTSB13} because it does not take into account the decoding failure probability. This is not not necessarily a problem in a setting where the scheme is used to devise ephemeral keys \cite{BGGMPST17,ABBBBDGGGMMPSTZ17}.
 However, in security models where an attacker is allowed to query the decryption oracle many times, this can be a problem as observed by 
\cite{GJS16} which showed how to attack the parameters proposed in \cite{MTSB13}. 
 This attack really exploits the non negligible decoding failure probability of the MDPC codes chosen in \cite{MTSB13}. If this probability were as low as $2^{-\lambda}$ where $2^\lambda$ is the
complexity of the best attack that the scheme has to sustain, then this would not be a problem and the security proof of \cite{BGGMPST17} 
could be used to show the security of the scheme under this stronger attacking model. 
This raises the issue whether or not
the error probability of MDPC codes can be made extremely small for affordable parameters.

We tackle this issue by giving several different answers to this issue. We study in depth this question in the regime which is particularly interesting for these cryptographic applications, namely when the weight of the parity-check equations is of order $\BO{\sqrt{n}}$ where 
$n$ is the length of the MDPC code. We define in the whole article MDPC codes in this way
\begin{definition}[MDPC code]  An MDPC code is a binary linear code that admits a parity check matrix whose rows are all of weight $\BO{\sqrt{n}}$ where $n$ is the length of the code.  In the case where this parity-check matrix has rows of a same weight $w$ and columns of a same weight $v$, we say that the parity-check matrix is of type $(v,w)$. By some abuse of terminology, we will also call the corresponding code a code of type $(v,w)$.
\end{definition}

We will decode these codes with an even simpler bit-flipping decoding algorithm than the one considered in \cite{MTSB13} to simplify the analysis. 
One round of decoding is just majority-logic decoding based 
on a sparse parity-check matrix of the code. When we perform just one round of bit-flipping we call this decoder a majority-logic 
decoder. Recall that a majority logic decoder based on a certain parity-check matrix computes for all bits the number $u_i$ of 
parity-checks that involve the bit $i$ that are are unsatisfied. Let $n_i$ be the number of parity-checks involving bit $i$. 
If for a bit $i$ we have $u_i > n_i/2$ (i.e if a strict majority of such parity-checks is violated) the bit gets flipped. 
The parity-check equations used for making the decision on a given bit may depend on the bit (in particular they may have disjoint supports outside the bit they help to decode). This is not the path we follow here. It turns out that for an MDPC code, we can use all  the parity-check equations defining the MDPC code without too much penalty 
in doing so. We will assume here that the computation of the $v_i$'s is done in parallel so that flipping one bit does not affect the other $v_j$'s. In other words the decoder works as given in Algorithm \ref{algo:bf} when we perform 
$N$ iterations.

\begin{algorithm}
{\bf Input:} $\yv \in \{0,1\}^n$ the word that has to be decoded\\
{\bf Output:}  the decoded word.
\caption{Bit-flipping decoder}
\label{algo:bf}
\begin{algorithmic}
\ForAll{$i \in \{1,\dots,n\}$}
\State{$n_i \gets \#\left\{j \in \{1,\dots,r\}:h_{ji}=1\right\}$}
\EndFor
\For{$a=1$ to $N$}
\ForAll{$i \in \{1,\dots,n\}$}
\State{$u_i \gets \#\left\{j \in \{1,\dots,r\}:h_{ji}=1,\;\bigoplus_{\ell} h_{j\ell} y_{\ell}= 1\right\}$}
\EndFor
\ForAll{$i \in \{1,\dots,n\}$}
\If{$u_i >n_i/2$} \label{step:threshold}
\State{$y_i \gets 1-y_i$}
\EndIf
\EndFor
\EndFor
\Return{$\yv$}
\end{algorithmic}
\end{algorithm}

A crucial quantity will play an important role, namely

\begin{definition}[maximum column intersection] Let $\Hm=(h_{ij})_{\substack{1 \leq i \leq r \\1 \leq j \leq n}}$ be a binary matrix. The intersection number of two different columns $j$ and $j'$ of $\Hm$ is equal to the number of rows $i$ for which $h_{ij}=h_{ij'}=1$. 
 The 
{\em maximum column intersection} of $\Hm$ is equal to the maximum intersection number of two different columns of $\Hm$.
\end{definition}

The point is that it is readily verified (see Proposition \ref{prop:super_easy}) that an MDPC code of type $(v,w)$ 
corrects all errors of weight $\leq 
\lfloor \frac{v}{2s} \rfloor$ by majority-logic decoding  (i.e. Algorithm \ref{algo:bf} with $N=1$) when the 
 maximum 
column intersection is $s$. 
What makes this result interesting is  that for most MDPC codes the maximum column intersection is really small. We namely prove that for a natural 
random MDPC code model, the maximum column intersection  of the parity-check matrix defining the MDPC code is typically of
order $\BOm{\frac{\log n}{\log \log n}}$. Computing the maximum intersection number can obviously be performed in polynomial time and this allows us to give a randomized polynomial time algorithm for constructing MDPC codes of length $n$ and fixed rate $R \in (0,1)$ that corrects any pattern of 
$\BOm{\frac{\sqrt{n} \log \log n}{\log n}}$ errors with the majority-logic decoder.

Moreover, under a reasonable assumption on the first round of the bit-flipping decoder, the same MDPC codes correct $\BOm{\sqrt{n}}$ errors with two iterations of a bit-flipping decoder with decoding failure probability of order 
$e^{-\BOm{\frac{n \log \log n}{\log n}}}$.
It should be noted that under an additional assumption on the subsequent iterations of the bit-flipping decoder, it has been proved in \cite{ABBBBDGGGMMPSTZ17} that MDPC codes correct $\BOm{\sqrt{n}}$ 
 errors by performing an unbounded 
number of bit-flipping iterations  with probability of error $e^{-\BOm{n}}$.
We also provide some concrete numbers to show that it is possible to construct MDPC codes that avoid completely the  attack \cite{GJS16} and for which it is possible to provide a security proof in strong security models with a significant key size overhead when compared to the parameters proposed in \cite{MTSB13} if we want to stay in the no-error scenario, with
a reasonable overhead if we make the first assumption mentioned above, and  moderate overhead if we make  both assumptions mentioned above.

\section{Majority-logic decoding and its performance for MDPC codes}
We start this section by relating the error-correction capacity of an MDPC code to the maximum column intersection  of its defining  parity-check matrix,  then show that for typical MDPC codes the intersection number is small and that 
this allows to construct efficiently MDPC codes that correct all patterns of $\BO{\frac{\sqrt{n}\log \log n}{\log n}}$ errors.

\subsection{Error correction capacity of an MDPC code vs. maximum column intersection}

As explained in the introduction the maximum column intersection can be used to lower bound
the worst-case error-correction perfomance for the majority-logic decoder (Algorithm \ref{algo:bf} with 
$N=1$). The precise statement is given by the following proposition.
\begin{proposition}\label{prop:super_easy}
Consider a code with a parity check matrix for which every column has weight at least $v$ and whose maximum 
column intersection is $s$. Performing majority-logic decoding based on this matrix (i.e. Algorithm \ref{algo:bf} with $N=1$) corrects all errors of weight $\leq 
\lfloor \frac{v}{2s} \rfloor$.
\end{proposition}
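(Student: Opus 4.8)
The plan is to analyze the parallel majority vote one coordinate at a time and to prove two complementary facts: every erroneous coordinate receives a strict majority of unsatisfied checks (so it is flipped), while every correct coordinate receives at most half (so it is left alone). Because the counters $u_i$ in Algorithm~\ref{algo:bf} are all computed from the incoming word $\yv$ \emph{before} any flip is applied, establishing these two statements for all $i$ at once immediately shows that a single round sends $\yv$ to the transmitted codeword.

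First I would fix notation: write $\yv = \cv + \ev$ with $\cv$ a codeword and $t \eqdef \wt{\ev} \le \lfloor v/(2s)\rfloor$. Since $\Hm\cv = \zero$, a parity-check (row of $\Hm$) is unsatisfied for $\yv$ if and only if its support meets $\supp(\ev)$ in an odd number of positions. Then I would fix a coordinate $i$ and restrict attention to the $n_i \ge v$ checks whose support contains $i$. The single device driving the whole argument is the maximum column intersection: for any other coordinate $i'$, the number of checks whose support contains both $i$ and $i'$ is the intersection number of columns $i$ and $i'$, hence at most $s$.

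The key counting step is to call an incident check \emph{polluted} if its support contains an error coordinate other than $i$ itself; summing the bound $s$ over the error coordinates $i'\neq i$ shows the number of polluted incident checks is at most $(t-1)s$ when $e_i=1$ and at most $ts$ when $e_i=0$. I then split into two cases. If $e_i=1$, every non-polluted incident check sees exactly one error (namely $i$) and is therefore unsatisfied, giving $u_i \ge n_i-(t-1)s$; since $t \le v/(2s)$ and $v \le n_i$ one has $(t-1)s < ts \le v/2 \le n_i/2$, so $u_i > n_i/2$ and the bit is flipped. If $e_i=0$, an incident check can only be unsatisfied if it is polluted, so $u_i \le ts \le v/2 \le n_i/2$, and the threshold test $u_i > n_i/2$ fails, preserving the correct bit.

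Since the whole thing is a double-counting of (check, error-coordinate) incidences controlled by $s$, I do not expect a deep obstacle; the one place to be careful is the bookkeeping of whether $i$ itself lies in $\supp(\ev)$, because this shifts the parity of each incident check by one and moves the relevant bound from $ts$ to $(t-1)s$. I would also lean explicitly on the strictness of the threshold ($u_i > n_i/2$ rather than $\ge$) so that the boundary case $ts = v/2 = n_i/2$ is resolved correctly on the correct coordinates. No hypothesis on the row weights or on any regularity is needed beyond the stated column-weight lower bound $v$ and the maximum column intersection $s$.
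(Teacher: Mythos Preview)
Your proof is correct and follows essentially the same route as the paper's: both split into the cases $e_i=1$ and $e_i=0$ and bound, via the column-intersection hypothesis, the number of incident checks that can be ``spoiled'' by error coordinates other than $i$ (what you call \emph{polluted} checks, and what the paper organizes as a small bipartite-graph edge count), arriving at the same bounds $(t-1)s$ and $ts$. The only cosmetic difference is that in the erroneous case the paper upper-bounds the number of \emph{satisfied} incident checks by $(t-1)s<v/2$, whereas you equivalently lower-bound the unsatisfied ones by $n_i-(t-1)s>n_i/2$.
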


\begin{proof}
We denote by $\Hm=(h_{ij})_{\substack{1 \leq i \leq r \\1 \leq j \leq n}}$ the parity-check matrix we use for performing majority-logic decoding and by $t$ the number of errors. We assume that $t \leq \lfloor \frac{v}{2s}\rfloor$.
For $i$ in $\{1,\dots,r\}$  denote by $E_{i}$ the subset of positions $\ell$ which are in error and in
 the support of the $i$-th parity check equation (i.e. $h_{i\ell}=1$). We number the parity-check equations of the code from $1$ to $r$.
We consider now what happens to $y_j$ in the algorithm.
There are two cases to consider. 

\noindent
{\bf Case 1:} $y_j$ is erroneous. 
We can upper-bound the number $s_j$ of satisfied parity-check equations involving this bit by the number of parity-check equations involving this bit whose 
support contains at least $2$ errors. 
We consider now the graph $G_j$ which is a bipartite graph associated to $j$ which is constructed as follows. Its set of vertices is the union of the set $A_j$ of positions different from $j$ which are in error and the set $B_j$ of parity-check equations that involve the position $j$ and whose support contains at least $2$ errors. There is an edge between a position $\ell$ in $A_j$ and parity-check equation $i$ in $B_j$ if and only if the parity-check equation involves $\ell$, 
that is $h_{i\ell}=1$.  Let $e_j$ be the number of edges of $G_j$ and let $n_j$ be the number of parity-check 
equations involving $j$. We observe now that
\begin{eqnarray}
s_j & \leq &\#\left\{ i: h_{ij} = 1, |E_i| \geq 2 \right\} \label{eq:un}\\
& \leq & e_j \nonumber \\
& \leq & s \# A_j \label{eq:deux} \\
& \leq & s (t-1) \nonumber \\
& \leq & s \left( \left\lfloor \frac{v}{2s} \right\rfloor -1\right) \nonumber\\
& < & v/2.\nonumber
\end{eqnarray}
\eqref{eq:un} is just the first observation whereas \eqref{eq:deux} follows from the fact the degree in $G_j$ of any vertex is at most $s$ 
by the assumption on the maximum intersection number of $\Hm$. Since $v/2 \leq n_j/2$ it follows that the majority-logic
decoder necessarily flips the bit  and therefore corrects the corresponding error.

{\bf Case 2:} there is no error in position $j$. We can upper-bound the number $u_j$ of unsatisfied positions in a similar way. This time
we consider the graph $G'_j$ whose vertex set is the union of $A'_j$ which is the set of positions which are in error and $B'_j$ the set 
of parity-check equations involving $j$ and whose support contains this time at least one error. We put an edge 
between a position $\ell$ in $A'_j$ and parity-check equation $i$ in $B'_j$ if and only if the parity-check equation involves $\ell$.
Let $e'_j$ be the number of edges of $G'_j$. Similarly to what we did we observe now that
\begin{eqnarray}
u_j & \leq &\#\left\{ i: h_{ij} = 1, |E_i| \geq 1 \right\} \nonumber\\
& \leq & e'_j \nonumber \\
& \leq & s \# A'_j \nonumber \\
& \leq & s t \nonumber \\
& \leq & s \left( \left\lfloor \frac{v}{2s} \right\rfloor \right) \\
& \leq & v/2 \nonumber \\
& \leq & n_j/2. \nonumber
\end{eqnarray}
In other words we will not flip this bit.
\end{proof}

\subsection{A random model for MDPC codes of type $(v,w)$}

There are several ways to build random MDPC codes of type $(v,w)$. The one which is used in cryptography \cite{BBC08,MTSB13,DGZ17,BGGMPST17,ABBBBDGGGMMPSTZ17,BBCPS17}
is to 
construct them as quasi-cyclic codes. Our proof technique can also be applied to this case, but since there are several different types of construction of this kind, so that we have to adapt our proof technique to each of those, we will consider a more general model here. It is based on Gallager's construction of LDPC codes \cite{G63}. We will construct an $r \times n$ random parity-check matrix of type $(v,w)$ by assuming 
that $n$ is a multiple of $w$ ($n = n'w$), $r$ is a multiple of $v$ ($r = r'v$) and that 
$rw=nv$ (this condition is necessary in order to obtain a matrix of type $(v,w)$).
Let $\Pm_{n,w}$ be a matrix of size $n' \times n$ constructed as follows
$$
\Pm_{n,w} = \Imat_{n'} \otimes \un_w = 
\begin{pmatrix}
\un_w & \zero_w & \hdots & \hdots & \zero_w \\
\zero_w & \un_w & \zero_w & \cdots & \zero_w \\
\vdots & \ddots & \ddots & \ddots & \vdots \\
\vdots & \hdots & \ddots & \ddots & \zero_w\\
\zero_w & \hdots & \hdots & \zero_w & \un_w
\end{pmatrix}.
$$
where $\Imat_{n'}$ denotes the identity matrix of size $n'$, $\un_{w}$ a row vector of length $w$ whose entries are all equal to $1$, that 
is $\un_w = \underbrace{(1 \dots 1)}_{w \text{ times}}$, $\zero_{w}$  
a row vector of length $w$ whose entries are all equal to $0$. We then choose $v$ permutations of length $n$ at random and they define
a parity-check matrix $\Hm(\pi_1,\dots,\pi_v)$ of size $r \times n$ of type $(v,w)$ as
$$
\Hm(\pi_1,\dots,\pi_v) = \begin{pmatrix}
\Pm_{n,w}^{\pi_1}\\
\Pm_{n,w}^{\pi_2}\\
\hdots \\
\Pm_{n,w}^{\pi_v}
\end{pmatrix},
$$
where $\Pm_{n,w}^{\pi_i}$ denotes the matrix $\Pm_{n,w}$ whose columns have been permuted with $\pi_i$.
We denote by ${\cD}_{r,n,v,w}$ the associated probability distribution of binary matrices of size $r \times n$
and type $(v,w)$ we obtain when the $\pi_i$'s are chosen uniformly at random.

\subsection{The maximum intersection number of matrices drawn according to ${\cD}_{r,n,v,w}$}
The maximum intersection number of matrices drawn according to ${\cD}_{r,n,v,w}$ turns out to be remarkably small
when $w$ and $v$ are of order $\sqrt{n}$, it is namely typically of order
$\BO{\frac{\log n}{\log \log n}}$. To prove this claim we first observe that
\begin{lemma}\label{lem:inp}
Consider a matrix $\Hm$ drawn at random according to the distribution ${\cD}_{r,n,v,w}$. Take two arbitrary columns $j$ and $j'$ of $\Hm$ and
let $n_{jj'}$ be the intersection number of $j$ and $j'$. We have for all $t \in \{0,\dots,v\}$
$$
\prob(n_{jj'}=t) = \binom{v}{t} \left( \frac{w-1}{n-1}\right)^t \left( 1 - \frac{w-1}{n-1} \right)^{v-t}.
$$
\end{lemma}

\begin{proof}
Recall that $\Hm=(h_{ij})_{\substack{1 \leq i \leq r \\1 \leq j \leq n}}$ is of the form
$$
\Hm(\pi_1,\dots,\pi_v) = \begin{pmatrix}
\Pm_{n,w}^{\pi_1}\\
\Pm_{n,w}^{\pi_2}\\
\hdots \\
\Pm_{n,w}^{\pi_v}
\end{pmatrix},
$$
for some permutations $\pi_1$, $\pi_2, \dots,\pi_v$ chosen uniformly at random in  $S_n$.
A row $i$ of $\Hm$ is called a coincidence if and only if $h_{ij}=h_{ij'}$. There is obviously one coincidence
at most in each of the blocks $\Pm_{n,w}^{\pi_\ell}$. We claim now that the probability of a coincidence in each of these blocks is 
$\frac{w-1}{n-1}$. To verify this consider the row $i$ of block $\Pm_{n,w}^{\pi_\ell}$ which is such that $h_{ij}=1$. The probability that there is
a coincidence  for this block is the probability that $h_{ij'}=1$ which amounts to the fact that $\pi_{\ell}(j')$ takes its values
in a subset of $w-1$ values among $n-1$ possible values. All of these $n-1$ are equiprobable.
This shows the claim. Since the coincidences that occur in the blocks are all independent (since the
$\pi_i$'s are independent) we obtain the aforementioned formula.
\end{proof}

We use this to prove the following result
\begin{proposition}
Let $\alpha$ and $\beta$ be two constants such that $0<\alpha < \beta$. Assume we draw a parity-check
matrix $\Hm$  at random 
according
to the distribution ${\cD}_{r,n,v,w}$ where we assume that both $v$ and $w$ satisfy
$\alpha \sqrt{n} \leq v < w \leq \beta \sqrt{n}$. Then for any $\varepsilon >0$ 
the maximum intersection number of $\Hm$ is smaller than
$(2+\varepsilon) \frac{\ln n}{\ln \log n}$ with probability $1-o(1)$ as $n$ tends
to infinity.
\end{proposition}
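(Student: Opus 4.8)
The plan is to combine the per-pair tail estimate coming from Lemma~\ref{lem:inp} with a union bound over all $\binom{n}{2}$ pairs of columns. Writing $p \eqdef \frac{w-1}{n-1}$, Lemma~\ref{lem:inp} says that the intersection number $n_{jj'}$ of any fixed pair of columns is $\mathrm{Bin}(v,p)$-distributed. The first observation is that its mean is bounded by an absolute constant: under the hypothesis $\alpha\sqrt{n}\le v <w\le\beta\sqrt{n}$ we have $vp = v\cdot\frac{w-1}{n-1}\le \beta\sqrt{n}\cdot\frac{\beta\sqrt{n}}{n-1}\to\beta^2$, so for $n$ large enough $vp\le\lambda$ for any fixed constant $\lambda>\beta^2$. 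This is the crucial point: although the matrix is large, two random columns are expected to meet only a constant number of times, and everything reduces to controlling how far above this constant mean the maximum over all pairs can stray.

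Next I would produce a clean tail bound. Using $\binom{v}{t}\le \frac{v^t}{t!}$, dropping the factor $(1-p)^{v-t}\le 1$, and summing the geometrically decaying tail (legitimate since $t$ will be taken far larger than $\lambda$), one gets
$$
\prob(n_{jj'}\ge t)\;\le\;\sum_{k\ge t}\frac{(vp)^k}{k!}\;\le\;2\,\frac{\lambda^t}{t!}\;\le\;2\left(\frac{e\lambda}{t}\right)^{t},
$$
the last step being Stirling's bound $t!\ge (t/e)^t$. A union bound over the at most $n^2$ pairs then yields
$$
\prob\!\left(\text{max.\ intersection}\ge t\right)\;\le\;n^2\cdot 2\left(\frac{e\lambda}{t}\right)^{t}.
$$

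It remains to set $t=(2+\varepsilon)\frac{\ln n}{\ln\ln n}$ and check that this probability is $o(1)$; since the base of the inner logarithm is irrelevant to leading order, $\ln\log n\sim\ln\ln n$ and this matches the threshold in the statement. Taking logarithms, the right-hand side has log equal to $2\ln n + t\ln(e\lambda) - t\ln t + O(1)$. The dominant term is $-t\ln t$: because $\ln t = \ln\ln n - \ln\ln\ln n + O(1) = (1+o(1))\ln\ln n$, we obtain $t\ln t = (2+\varepsilon)(1+o(1))\ln n$, while $t\ln(e\lambda)=O(\ln n/\ln\ln n)=o(\ln n)$. Hence the log of the bound equals $\big(2-(2+\varepsilon)+o(1)\big)\ln n = (-\varepsilon+o(1))\ln n\to-\infty$, so the probability tends to $0$.

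The one place that needs care — and the only real obstacle — is the asymptotic accounting in the last step: the coefficient $2$ in the threshold comes exactly from the $n^2$ pairs in the union bound, so the whole argument hinges on $t\ln t$ genuinely reaching $(2+\varepsilon)\ln n$, and one must verify that the $-\ln\ln\ln n$ correction inside $\ln t$ and the constant $\ln(e\lambda)$ term are both absorbed by the arbitrarily small slack $\varepsilon$. Everything else is routine: the binomial tail, Stirling, and the union bound are all standard, and the constant mean $vp=\Theta(1)$ is what makes the factorial decay of $1/t!$ overwhelm the polynomial number of pairs precisely at $t\asymp \ln n/\ln\ln n$.
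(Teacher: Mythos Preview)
Your proof is correct and follows essentially the same approach as the paper: union bound over the $\binom{n}{2}$ pairs, combined with the binomial tail from Lemma~\ref{lem:inp}, reduced to an expression of the form $n^2 (C/t)^t$ for a constant $C$, and then the choice $t=(2+\varepsilon)\frac{\ln n}{\ln\ln n}$ to make this $o(1)$. The only cosmetic difference is the way the binomial tail is controlled: the paper uses the entropy bound $\binom{v}{s}\le \frac{v^v}{s^s(v-s)^{v-s}}$ and then regroups to obtain $\left(\frac{e v w}{s n}\right)^s\le\left(\frac{e\beta^2}{s}\right)^s$, whereas you go through $\binom{v}{k}\le v^k/k!$ and $k!\ge(k/e)^k$ to arrive at the equivalent $\left(\frac{e\lambda}{t}\right)^t$; both routes land on the same final estimate and the same asymptotic bookkeeping.
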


\begin{proof}
Let is number the columns of $\Hm$ from $1$ to $n$.
For $i$ and $j$ in $\{1,\dots,n\}$  two different columns of $\Hm$
we denote by $E_{i,j,t}$ the event that the intersection number of $i$ and $j$ is $\geq t$.
Let $E_t$ be the probability that the maximum intersection number is larger than or equal to $t$.
By the union bound, and then Lemma \ref{lem:inp} we obtain
\begin{eqnarray}
\prob(E_t) & = & \prob\left( \bigcup_{1 \leq i < j \leq n} E_{i,j,t} \right) \nonumber\\
& \leq & \sum_{1 \leq i < j \leq n} \prob(E_{i,j,t}) \nonumber \\
& \leq & n^2 \sum_{s=t}^v  \binom{v}{s} \left( \frac{w-1}{n-1}\right)^s \left( 1 - \frac{w-1}{n-1} \right)^{v-s} \nonumber \\
 \nonumber
\end{eqnarray}
From this we deduce
$$
\prob(E_t) 
 \leq  n^2 \sum_{s=t}^v  \frac{v^v}{s^s (v-s)^{v-s}} \left( \frac{w-1}{n-1}\right)^s \left( 1 - \frac{w-1}{n-1} \right)^{v-s} 
 $$
 where we use the well known upper-bound
$\binom{v}{s} \leq e^{v h(s/v)}=\frac{v^v}{s^s (v-s)^{v-s}}$. This allows to write
 \begin{eqnarray}
\prob(E_t) & \leq & n^2 \sum_{s=t}^v \frac{v^v}{s^s (v-s)^{v-s}} \left( \frac{w}{n} \right)^s \label{eq:trois}\\
& \leq & n^2 \sum_{s=t}^v \frac{v^{v-s}}{(v-s)^{v-s}} \left(\frac{v \cdot w}{s \cdot n}\right)^s \nonumber\\
& \leq & n^2 \sum_{s=t}^v \left( 1 + \frac{s}{v-s} \right)^{v-s} \left(\frac{v \cdot w}{s \cdot n}\right)^s\nonumber\\
& \leq & n^2 \sum_{s=t}^v \left(\frac{e \cdot v \cdot w}{s\cdot n}\right)^s \nonumber\\
& \leq & n^2 \sum_{s=t}^v \left(\frac{e \beta^2}{s}\right)^s \nonumber
\end{eqnarray}
Choose now $t \geq (2+\varepsilon) \frac{\ln n}{\ln \ln n}$ for some $\varepsilon >0$. 
When $n$ is large enough, we have that $\frac{e \beta^2}{t} < 1$. In such a case we can write
\begin{eqnarray*}
\prob(E_t) & \leq & n^2 \sum_{s=t}^v \left(\frac{e \beta^2}{t}\right)^s\\
& \leq & \frac{n^2 \left(\frac{e \beta^2}{t}\right)^t}{1 - \frac{e \beta^2}{t}}\\
& \leq &  n^2 \left( \frac{K}{t} \right)^t 
\end{eqnarray*}
for some constant $K>0$. This implies that 
\begin{eqnarray*}
\prob(E_t) & \leq & n^2 e^{(2+\varepsilon) \frac{\ln n}{\ln \ln n} \ln \left(\frac{K \ln \ln n}{\gamma \ln n} \right)}\\
& \leq &  e^{\varepsilon \ln n + \frac{\gamma \ln n \ln \left(\frac{K \ln \ln n}{\gamma}\right)}{\ln \ln n}}\\
& = & o(1)
\end{eqnarray*}
as $n$ tends to infinity.
\end{proof}

This together with Proposition \ref{prop:super_easy} implies directly the following corollary
\begin{corollary}\label{cor:construction}
There exists a randomized algorithm working in expected polynomial time outputting for any designed rate $R \in (0,1)$ an MDPC code of rate $\geq R$ of an arbitrarily large length $n$ and parity-check equations of weight $\Theta(\sqrt{n})$ that corrects all patterns of errors of size less than $\gamma \frac{\sqrt{n} \ln \ln n}{\ln n}$ 
for $n$ large enough, where $\gamma >0$ is some absolute constant.
\end{corollary}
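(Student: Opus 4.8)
The plan is to combine the two preceding results directly: the proposition on the maximum intersection number guarantees that a matrix drawn from $\cD_{r,n,v,w}$ has maximum column intersection $s < (2+\varepsilon)\frac{\ln n}{\ln\ln n}$ with probability $1-o(1)$, and Proposition~\ref{prop:super_easy} then turns this small intersection number into a guaranteed worst-case error-correction radius $\lfloor v/(2s)\rfloor$ for the majority-logic decoder. The randomized algorithm will simply sample $\Hm \samples \cD_{r,n,v,w}$ for a suitable choice of $(r,n,v,w)$, compute its maximum column intersection $s$ (done in polynomial time by comparing all $\binom{n}{2}$ pairs of columns), accept if $s < (2+\varepsilon)\frac{\ln n}{\ln\ln n}$, and resample otherwise.

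First I would fix the parameters so as to meet the rate constraint while keeping $v,w$ of order $\sqrt{n}$. A parity-check matrix of size $r\times n$ defines a code of dimension at least $n-r$, hence of rate at least $1 - r/n$. For a matrix of type $(v,w)$ the identity $rw=nv$ gives $r/n = v/w$, so the rate is at least $1 - v/w$, and it suffices to choose $v,w$ with $v/w \leq 1-R$. Concretely, I would fix a constant $\beta>0$, take $w$ to be an integer close to $\beta\sqrt{n}$ dividing $n$, set $v = \lfloor (1-R)w\rfloor$, and then $r = n'v$ with $n'=n/w$; one checks that $rw=nv$ and $r=r'v$ hold automatically, so the construction is a legitimate element of the support of $\cD_{r,n,v,w}$ for arbitrarily large $n$. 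With these choices $v = \Theta(\sqrt n)$ and $\alpha\sqrt{n}\le v < w \le \beta\sqrt{n}$ holds for a constant $\alpha$ slightly below $(1-R)\beta$, which is exactly the regime required by the preceding proposition.

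Next I would extract the error-correction radius. On the accepting event $s < (2+\varepsilon)\frac{\ln n}{\ln\ln n}$, Proposition~\ref{prop:super_easy} shows that the decoder corrects every error pattern of weight at most
\[
\left\lfloor \frac{v}{2s}\right\rfloor \;>\; \frac{\alpha\sqrt n}{2(2+\varepsilon)\frac{\ln n}{\ln\ln n}} - 1 \;=\; \frac{\alpha}{2(2+\varepsilon)}\cdot\frac{\sqrt n\,\ln\ln n}{\ln n} - 1 .
\]
For $n$ large enough the additive $-1$ is absorbed, so every pattern of weight less than $\gamma\frac{\sqrt n\,\ln\ln n}{\ln n}$ is corrected, where $\gamma$ is any constant strictly below $\frac{\alpha}{2(2+\varepsilon)}$ — an absolute constant depending only on $R$ (through $\alpha$) and on the fixed choice of $\varepsilon$.

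Finally I would bound the running time. Since the accepting event has probability $1-o(1)$, it is $\ge \tfrac12$ for all large $n$, so the expected number of resampling rounds is $\BO{1}$; each round samples $v$ permutations and computes the maximum column intersection in polynomial time, giving expected polynomial time overall and an output satisfying all the stated requirements. I do not expect a single hard step — the result is essentially a packaging of the two propositions, and the efficient computability of the maximum column intersection is what upgrades a high-probability existence statement into a Las Vegas algorithm. The one point requiring care is the parameter bookkeeping of the second paragraph: the constants $\alpha,\beta$ must be fixed before sampling, so one must verify that for each target rate $R$ a valid integer triple $(n,v,w)$ simultaneously satisfying the type/divisibility constraints and $\alpha\sqrt n\le v<w\le\beta\sqrt n$ exists for arbitrarily large $n$, which is where the floors and the choice of $\beta$ have to be reconciled.
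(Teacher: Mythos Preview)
Your proposal is correct and follows essentially the same approach as the paper: sample $\Hm$ from $\cD_{r,n,v,w}$, test the maximum column intersection, resample if it exceeds $(2+\varepsilon)\frac{\ln n}{\ln\ln n}$, and invoke Proposition~\ref{prop:super_easy} on acceptance. The only difference is that the paper sidesteps the divisibility bookkeeping you flag at the end by taking $n=w^2$ (so $w=\sqrt{n}$, $\beta=1$, and $n'=w$ automatically), which makes the integer constraints trivial.
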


\begin{proof}
The randomized algorithm is very simple. We choose $n$ to be a square $n=w^2$ for some integer $w$ and let $v \eqdef \left\lfloor (1-R)w \right\rfloor$ and 
$r \eqdef \frac{nv}{w}$. 
Then we draw a parity-check
matrix $\Hm$  at random 
according
to the distribution ${\cD}_{r,n,v,w}$.
The corresponding code has clearly rate $\geq R$.
We compute the maximum column intersection of $\Hm$.  This can be done in time $\BO{wn^2}$. If this column intersection is greater than
$(2+\varepsilon) \frac{\ln n}{\ln \ln n}$ we output the corresponding MDPC code, if not we draw at random $\Hm$ again until finding a suitable
matrix $\Hm$. By Proposition \ref{prop:super_easy} we know that such a code can correct all patterns of at most 
$\left\lfloor \frac{\alpha \sqrt{n}\ln \ln n}{(4+2\varepsilon) \ln n} \right\rfloor$ errors. This implies the corollary.
 \end{proof}

\section{Analysis of two iterations of bit-flipping}

We derived in the previous section a condition ensuring that one round of bit-flipping corrects all the errors. 
We will now estimate the probability that performing
one round of bit-flipping corrects enough errors so that another round of bit-flipping will correct all remaining errors. To analyze the first round of decoding we will model intermediate quantities of the bit-flipping algorithm by  binomial distributions.
More precisely, consider an MDPC code of type $(v,w)$ and length $n$.
 The noise model is the following: an error of weight $t$ was chosen uniformly at random and added to the
codeword of the MDPC code. For $i \in \{1,\dots,n-t\}$, let $E^{0}_i$ be the Bernouilli random variable which is equal to $1$ iff the $i$-th position that was not in error initially is in error after the first round of iterative decoding. We also denote by $U^0_i$ the counter $u_j$ associated to the 
$i$-th position which was not in error. $U_i^0$ is the sum $\sum_{j=1}^vV^0_{ij}$ of $v$ Bernouilli random variables $V^0_{ij}$ associated to the $v$ parity-check equations involving this bit. A Bernoulli-random variable $V^0_{ij}$ is equal to $1$ if and only the corresponding parity-check is equal to $1$.
Note that by definition of the bit-flipping decoder
$$
E_i^0 = \un_{\{U_i^0 > v/2\}}
$$
Similarly, for $i \in \{1,\dots,t\}$ we denote by 
$E^1_i$ the Bernoulli random variable that is equal to $1$ iff the $i$-th bit that was in error initially stays in error after 
the first round of Algorithm \ref{algo:bf}. We also define the $U^1_i$'s and the $V^1_{ij}$'s similarly. In this case 
$$
E_i^1 = \un_{\{U_i^1 \leq v/2\}}.
$$
Let us bring in for $b \in \{0,1\}$:
\begin{equation}
\label{e:defpb}
p_b \eqdef \prob(V^{b}_{ij}=1).
\end{equation}
It is clear that these probabilities do not depend on $i$ and $j$ and that this definition is consistent. It is (essentially) proved in \cite{ABBBBDGGGMMPSTZ17} that 

\begin{restatable}{lemma}{lempb}
\label{lem:pb}
Assume that $w = \BO{\sqrt{n}}$ and $t = \BO{\sqrt{n}}$. Then
\begin{equation}
p_b = \frac{1}{2} - (-1)^b \varepsilon \left(\frac{1}{2}+\BO{\frac{1}{\sqrt{n}}}\right),
\end{equation}
where $\varepsilon \eqdef e^{-\frac{2wt}{n}}$.
\end{restatable}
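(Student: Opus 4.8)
The plan is to reduce the Bernoulli variable $V^b_{ij}$ to a parity count and then estimate a single exponential sum. Write $\yv = \cv + \ev$ with $\cv$ a codeword and $\ev$ the error of weight $t$. Since the rows of the parity-check matrix satisfy $\Hm\cv = \zero$, the value $\bigoplus_\ell h_{j\ell}y_\ell$ equals $\bigoplus_\ell h_{j\ell}e_\ell$, i.e.\ the parity of the number of errors lying in the support of the $j$-th check; hence $V^b_{ij}=1$ if and only if this number is odd. I would then split on the status of the bit $i$ being decoded. The support of the check consists of $i$ together with $w-1$ other positions, so if $X$ denotes the number of errors among those $w-1$ positions, the check is unsatisfied exactly when $X$ is odd in the case $b=0$ (bit $i$ correct) and exactly when $X$ is even in the case $b=1$ (the error at $i$ contributes one extra error to the count).

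This immediately gives $p_0 = \prob(X \text{ odd}) = \tfrac12\left(1 - \mathbb{E}[(-1)^X]\right)$ and $p_1 = \prob(X \text{ even}) = \tfrac12\left(1 + \mathbb{E}[(-1)^X]\right)$, so that the whole lemma reduces to showing
\[
\mathbb{E}[(-1)^X] = \varepsilon\left(1 + \BO{\tfrac{1}{\sqrt n}}\right);
\]
the sign pattern $-(-1)^b$ and the common factor $\tfrac12 + \BO{1/\sqrt n}$ then fall out of the two cases (using $\varepsilon\le 1$ to absorb $\varepsilon\,\BO{1/\sqrt n}$ into $\BO{1/\sqrt n}$). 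One subtlety to record is that conditioning on bit $i$ distributes the remaining errors differently in the two cases — $t$ errors over the $n-1$ other coordinates when $b=0$, and $t-1$ errors when $b=1$ — so $X$ is hypergeometric with slightly different parameters, and I will have to check that replacing $(t,n,w)$ by $(t-1,n-1,w-1)$ perturbs the answer only at the $\BO{1/\sqrt n}$ level.

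For the estimate itself I would expand $(-1)^X = \prod_{x}\left(1 - 2\,\un_{\{x \text{ in error}\}}\right)$ over the $w-1$ coordinates and take expectations by inclusion--exclusion, obtaining
\[
\mathbb{E}[(-1)^X] = \sum_{j=0}^{w-1}\binom{w-1}{j}(-2)^j \prod_{i=0}^{j-1}\frac{t-i}{\,n-1-i\,}.
\]
The leading behaviour is captured by the independent-coin value $\left(1 - \tfrac{2t}{n-1}\right)^{w-1}$, which under $w=\BO{\sqrt n}$ and $t=\BO{\sqrt n}$ equals $e^{-2wt/n + \BO{1/\sqrt n}} = \varepsilon\left(1 + \BO{1/\sqrt n}\right)$: indeed $2(w-1)t/(n-1)$ differs from $2wt/n$ by $\BO{1/\sqrt n}$, and the higher-order terms of $\ln(1-2t/n)$ contribute $(w-1)\cdot\BO{t^2/n^2} = \BO{1/\sqrt n}$. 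It then remains to bound the gap between the falling factorials $\prod_{i=0}^{j-1}\frac{t-i}{n-1-i}$ and the powers $(t/n)^j$. Writing the ratio as $\prod_{i=0}^{j-1}\frac{1-i/t}{1-i/(n-1)} = 1 - \binom j2\bigl(\tfrac1t - \tfrac1{n-1}\bigr) + \cdots$, resumming the resulting series with $\binom j2\binom{w-1}{j} = \binom{w-1}{2}\binom{w-3}{j-2}$, and using that the series is effectively supported on $j=\BO 1$ because $2wt/n = \BO 1$, the aggregate relative error comes out as $\BO{w^2t/n^2} = \BO{t/n} = \BO{1/\sqrt n}$, as required.

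The main obstacle is precisely this last step: one must control the hypergeometric-versus-independent discrepancy to the sharp relative order $\BO{1/\sqrt n}$ rather than a mere $o(1)$, which is where the hypotheses $w,t=\BO{\sqrt n}$ are genuinely used (they force both the mean index $2wt/n$ and the product $w^2t/n^2\cdot\sqrt n$ to stay bounded). Once these routine but delicate falling-factorial estimates are in place, and the analogous computation with $t-1$ errors for the case $b=1$ is seen to differ only by $\BO{1/\sqrt n}$, both $p_0$ and $p_1$ inherit the single clean constant $\varepsilon = e^{-2wt/n}$ with the stated error term, completing the proof.
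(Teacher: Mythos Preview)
Your proposal is correct and takes a genuinely different route from the paper. The paper recognizes the bias $1-2\prob(\langle\hv,\ev\rangle=1)$ as a normalized Krawtchouk-polynomial evaluation, $\tfrac{(-2)^w}{\binom{n}{w}}P_w^n(t)$, and then estimates it by telescoping the ratio $P_w(x)/P_w(x-1)$ from $x=0$ to $x=t$ using a sharp asymptotic due to Kalai--Linial. You instead expand $(-1)^X$ as a product of factors $1-2\,\un_{\{\cdot\}}$, take the expectation term by term to obtain the falling-factorial sum, and compare it directly to the independent-coordinate value $(1-2t/n)^{w-1}$, using that the mass of the sum sits at $j=\BO{1}$ because $2wt/n=\BO{1}$. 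Your argument is fully elementary and self-contained, whereas the paper leans on an external Krawtchouk asymptotic; conversely the paper's route is shorter once that machinery is in hand and ties into standard coding-theory tools. You are also more explicit than the paper about the conditioning on $b$: the paper proves only the unconditional bias and then asserts that $p_b=\prob(\langle\hv,\ev\rangle=1\mid e_1=b)$ yields the lemma, while you actually track the shift from $t$ to $t-1$ errors among the remaining $n-1$ coordinates. One small wording slip: in the $b=1$ case only $t$ changes to $t-1$; the ambient size $n-1$ and the number of sampled positions $w-1$ are the same as for $b=0$, so your phrase ``replacing $(t,n,w)$ by $(t-1,n-1,w-1)$'' overstates the perturbation, though the conclusion that the change is $\BO{1/\sqrt n}$ is unaffected.
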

We will recall a proof of this statement in the appendix.
We will now make the following assumption that simplifies the analysis
\begin{assumption}\label{ass:basic}
When we use Algorithm \ref{algo:bf} on an MDPC code of type $(v,w)$, we assume that 
\begin{itemize}
\item
for all $i \in \{1,\dots,n-t\}$ the counters $U_i^0$ of Algorithm \ref{algo:bf} are distributed like sums of $v$ independent Bernoulli random variables of parameter $p_0$ at the first iteration and the $E^0_i$'s are independent;
\item for all $i \in \{1,\dots,t\}$ the counters $U_i^1$ of Algorithm \ref{algo:bf} are distributed like sums of $v$ independent Bernoulli random variables of parameter $p_1$ at the first iteration and the $E^1_i$'s are independent.
\end{itemize}
\end{assumption}
The experiments performed in \cite{C17} corroborate this assumption for the first iteration of the bit-flipping decoder.
To analyze the behavior of Algorithm \ref{algo:bf} we will use the following lemma which is just a slight generalization of Lemma 6 in \cite{ABBBBDGGGMMPSTZ17}
\begin{restatable}{lemma}{lemqb}
\label{lem:qb}
Under Assumption \ref{ass:basic} used for an MDPC code of type $(v,w)$ and when the error is chosen uniformly at random among the errors of 
weight $t$, we have for all
 $(b,i) \in \{0\}\times\{1,\ldots,n-t\} \cup \{1\}\times \{1,\ldots,t\}$,
$$
\prob(E^b_i=1) = \BO{\dfrac{(1-\varepsilon^2)^{v/2}}{\sqrt{v} \varepsilon}},
$$
where $\varepsilon \eqdef e^{-\frac{2wt}{n}}$.
\end{restatable}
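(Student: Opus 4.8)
The plan is to recognise both probabilities as binomial tail probabilities and to bound each by the largest term of the tail times a geometric correction factor of order $1/\varepsilon$. By Assumption~\ref{ass:basic}, $U_i^0$ is a sum of $v$ independent Bernoulli variables of parameter $p_0$, so $\prob(E_i^0=1)=\prob(U_i^0>v/2)$ is the upper tail of a $\mathrm{Bin}(v,p_0)$ law; Lemma~\ref{lem:pb} gives $p_0=\frac12-\varepsilon\left(\frac12+\BO{1/\sqrt n}\right)<\frac12$, so this is a deviation above the mean $vp_0$. Symmetrically $\prob(E_i^1=1)=\prob(U_i^1\le v/2)$ with $U_i^1\sim\mathrm{Bin}(v,p_1)$ and $p_1>\frac12$; the substitution $k\mapsto v-k$ turns $v-U_i^1$ into a $\mathrm{Bin}(v,1-p_1)$ variable with $1-p_1<\frac12$ obeying the same asymptotics $4(1-p_1)p_1=(1-\varepsilon^2)\left(1+\BO{1/\sqrt n}\right)$, and identifies $\prob(U_i^1\le v/2)$ with the upper tail $\prob(\mathrm{Bin}(v,1-p_1)\ge v/2)$. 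Both cases thus reduce to bounding the upper tail of a binomial $\mathrm{Bin}(v,p)$ with $p=\frac12-\frac{\varepsilon}{2}+\BO{\varepsilon/\sqrt n}<\frac12$, and I treat this generic case.

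First I would write $\prob(\mathrm{Bin}(v,p)>v/2)=\sum_{k>v/2}t_k$ with $t_k=\binom{v}{k}p^k(1-p)^{v-k}$, and control the ratio of consecutive terms
\[
\frac{t_{k+1}}{t_k}=\frac{v-k}{k+1}\cdot\frac{p}{1-p}.
\]
For every $k\ge v/2$ the first factor is at most $1$, so the ratio is bounded by $\rho\eqdef p/(1-p)<1$; the tail is therefore dominated by a geometric series and
\[
\prob(\mathrm{Bin}(v,p)>v/2)\le\frac{t_{k_0}}{1-\rho},\qquad k_0\eqdef\lceil v/2\rceil .
\]
Here $\frac{1}{1-\rho}=\frac{1-p}{1-2p}$, and since $1-2p=\varepsilon\left(1+\BO{1/\sqrt n}\right)$ while $1-p=\Theta(1)$, this prefactor is $\BO{1/\varepsilon}$.

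It then remains to estimate the leading term $t_{k_0}$. Using the central binomial bound $\binom{v}{k_0}=\BO{2^v/\sqrt v}$ (Stirling) together with $p^{k_0}(1-p)^{v-k_0}\le(p(1-p))^{v/2}$ (valid because $p<1-p$ and $k_0$ exceeds $v/2$ by $\BO{1}$), one gets $t_{k_0}=\BO{(4p(1-p))^{v/2}/\sqrt v}$. Since $4p(1-p)=(1-\varepsilon^2)\left(1+\BO{1/\sqrt n}\right)$ this becomes $t_{k_0}=\BO{(1-\varepsilon^2)^{v/2}/\sqrt v}$ once the correction is absorbed, and multiplying by the $\BO{1/\varepsilon}$ geometric factor gives exactly $\prob(E_i^b=1)=\BO{(1-\varepsilon^2)^{v/2}/(\sqrt v\,\varepsilon)}$.

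The step I expect to be delicate is precisely the absorption of the sub-leading corrections coming from Lemma~\ref{lem:pb}: raising $1+\BO{1/\sqrt n}$ to the power $v/2$ is harmless only because $v=\BO{\sqrt n}$, which makes $\left(1+\BO{1/\sqrt n}\right)^{v/2}=e^{\BO{1}}=\BO{1}$, and this is where the $\BO{\sqrt n}$ hypotheses on $v$, $w$ and $t$ are genuinely used; the factoring of $4p(1-p)$ as a relative perturbation of $1-\varepsilon^2$ moreover needs $\varepsilon$ to stay away from the degenerate endpoint $\varepsilon\to1$, which is the regime of interest. One must also track the exact location $k_0$ of the leading term (the parity of $v$ and its $\BO{1}$ offset from $v/2$) and the single boundary term $k=v/2$ separating the cases $b=0$ and $b=1$, both of which only affect constants. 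The remainder is the routine geometric-sum and Stirling bookkeeping sketched above.
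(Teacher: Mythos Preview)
Your argument is correct and takes a genuinely different route from the paper. The paper invokes its general large-deviation Lemma~\ref{lem:sum}, which expresses the binomial tail as $\frac{C}{\sqrt{v}}e^{-v\Kull{1/2}{p_b}}$, and then uses the closed form $\Kull{1/2}{1/2+\delta}=-\tfrac12\ln(1-4\delta^2)$ from Lemma~\ref{lem:DK} to turn the exponent into $(1-\varepsilon^2)^{v/2}$; the $1/\varepsilon$ in the prefactor comes from the $\frac{1}{\tau-p}$ factor in Lemma~\ref{lem:sum}. You instead bound the tail directly by a geometric series dominated by the central term $t_{k_0}$, extract the $1/\varepsilon$ from $\frac{1}{1-\rho}=\frac{1-p}{1-2p}$, and get $(1-\varepsilon^2)^{v/2}/\sqrt{v}$ from Stirling plus the identity $4p(1-p)=1-(1-2p)^2$. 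Your approach is more elementary and self-contained (it avoids the Kullback--Leibler machinery entirely), while the paper's approach has the advantage of reusing Lemma~\ref{lem:sum}, which is needed anyway in the proof of Theorem~\ref{th:asymptotic}. You are also more explicit than the paper about the one genuinely nontrivial point: absorbing the $(1+\BO{1/\sqrt n})^{v/2}$ correction requires $v=\BO{\sqrt n}$, which in the paper's proof is hidden inside the passage from $e^{\frac{v}{2}\bigl(\ln(1-\varepsilon^2)+\BO{1/\sqrt n}\bigr)}$ to $\BO{(1-\varepsilon^2)^{v/2}}$.
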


For the ease of reading the proof of this lemma is also recalled in the appendix.
Under Assumption \ref{ass:basic}, $\prob(E^b_i=1)$ does not depend on $i$, we will  denote it by
$$
q_b \eqdef \prob(E^b_i=1).
$$
We let
\begin{eqnarray*}
S_0 & \eqdef & E^0_1 + \dots + E^0_{n-t}\\
S_1 & \eqdef & E^1_1 + \dots + E^1_{t}
\end{eqnarray*}
$S_0$ is the number of errors that were introduced after one round of iterative decoding coming from flipping the $n-t$ bits that were initially correct.
Similarly $S_1$ is the number of errors that are left after one round of iterative decoding coming from not flipping the $t$ bits that were initially incorrect.
Let $S \eqdef S_0+S_1$, which represents the total number of errors that are left after the first round of iterative decoding.
We quantify the probability that this quantity does not decay enough by the following theorem which holds under Assumption \ref{ass:basic}.
\begin{restatable}{theorem}{thAsymptotic}
\label{th:asymptotic}
Under Assumption \ref{ass:basic}, we have for an MDPC code of type $(v,w)$ where $v=\Theta(\sqrt{n})$ and
$w=\Theta(\sqrt{n})$:
$$
\prob(S \geq t') \leq \dfrac{1}{\sqrt{t'}} e^{\frac{t' v}{4} \ln\left(1-\varepsilon^2 \right) + \frac{t'}{8} \ln\left(n \right) + \BO{t'\ln(t'/t)}},
$$
where $\varepsilon \eqdef e^{-\frac{2wt}{n}}$.
\end{restatable}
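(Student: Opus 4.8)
The plan is to reduce the tail bound on $S=S_0+S_1$ to two separate binomial tail estimates and then feed in the bound on $q_0,q_1$ supplied by Lemma~\ref{lem:qb}. First I would observe that $S\ge t'$ forces at least one of the two halves to be large: if $S_0<t'/2$ and $S_1<t'/2$ then $S<t'$, so the union bound gives
\begin{equation*}
\prob(S\ge t')\le \prob\!\left(S_0\ge t'/2\right)+\prob\!\left(S_1\ge t'/2\right).
\end{equation*}
This halving of the threshold is precisely what produces the constants $v/4$ and $\ln(n)/8$ in the exponent, so it is the step that fixes the shape of the final bound.

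Next, under Assumption~\ref{ass:basic} the summands of $S_0$ (resp.\ $S_1$) are independent Bernoulli variables of parameter $q_0$ (resp.\ $q_1$). For a sum of $m$ independent Bernoulli variables each of parameter at most $q$, a union bound over the $k$-subsets of indices that could witness the event yields $\prob(\text{sum}\ge k)\le \binom{m}{k}q^{k}$. Applying this with the common upper bound $q\eqdef\BO{\frac{(1-\varepsilon^2)^{v/2}}{\sqrt v\,\varepsilon}}$ on $q_0$ and $q_1$ from Lemma~\ref{lem:qb}, and bounding $\binom{n-t}{t'/2}$ and $\binom{t}{t'/2}$ by $\binom{n}{t'/2}$, I obtain
\begin{equation*}
\prob(S\ge t')\le 2\binom{n}{t'/2}\,q^{\,t'/2}.
\end{equation*}

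I would then insert the Stirling-type estimate $\binom{n}{t'/2}\le \frac{1}{\sqrt{\pi t'}}\left(\frac{2en}{t'}\right)^{t'/2}$, which simultaneously creates the $\frac{1}{\sqrt{t'}}$ prefactor and the $(2en/t')^{t'/2}$ factor. Taking logarithms, the exponent equals $\frac{t'}{2}\left(\ln n+\ln q-\ln t'\right)+\BO{t'}$. Now I substitute the regime $v=\Theta(\sqrt n)$, $w=\Theta(\sqrt n)$, $t=\Theta(\sqrt n)$, so that $\varepsilon$ is a positive constant, $\ln q=\tfrac v2\ln(1-\varepsilon^2)-\tfrac14\ln n+\BO{1}$ (the $-\tfrac14\ln n$ coming from the factor $v^{-1/2}=\Theta(n^{-1/4})$ inside $q$), and $\ln t'=\tfrac12\ln n+\ln(t'/t)+\BO{1}$. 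Collecting the three $\ln n$ contributions, from $n^{t'/2}$, from $v^{-t'/2}$, and from $-\ln t'$, gives coefficient $\bigl(1-\tfrac14-\tfrac12\bigr)\tfrac{t'}{2}=\tfrac{t'}{8}$, the $(1-\varepsilon^2)$ term becomes $\tfrac{t'v}{4}\ln(1-\varepsilon^2)$, and the leftover $-\tfrac{t'}{2}\ln(t'/t)$ together with the $\BO{t'}$ folds into $\BO{t'\ln(t'/t)}$, which is exactly the claimed inequality.

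The main obstacle is the bookkeeping in this last step: one must track the coefficient of $\ln n$ exactly while $\ln t'$ is re-expressed through $\ln t=\tfrac12\ln n$, and must verify that the $v^{-1/2}=\Theta(n^{-1/4})$ factor hidden in $q$ is what cuts the naive $\tfrac{t'}{4}\ln n$ coming from $\binom{n}{t'/2}$ down to $\tfrac{t'}{8}\ln n$. A secondary point is to check that the crude domination $q_0,q_1\le q$ and the replacement of $\binom{n-t}{t'/2},\binom{t}{t'/2}$ by $\binom{n}{t'/2}$ lose nothing essential, which holds because $t=\Theta(\sqrt n)$ keeps the discarded logarithmic factors within the error term $\BO{t'\ln(t'/t)}$ already present.
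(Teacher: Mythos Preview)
Your proposal is correct and follows essentially the same line as the paper: split $\prob(S\ge t')\le\prob(S_0\ge t'/2)+\prob(S_1\ge t'/2)$, plug in the bound on $q_b$ from Lemma~\ref{lem:qb}, and collect the $\ln n$ contributions under the regime $v,w,t=\Theta(\sqrt n)$. The only difference is in the tool used for the binomial tail: the paper invokes the Kullback--Leibler asymptotic of Lemma~\ref{lem:sum} and then expands $\Kull{\cdot}{\cdot}$ via Lemma~\ref{lem:DK}, whereas you use the cruder but sufficient inequality $\prob(\text{Bin}(m,q)\ge k)\le\binom{m}{k}q^k$ together with Stirling. In the present regime (where $q_b$ is exponentially small in $v$ and $k/m$ is small) these give the same exponent, so your shortcut is a legitimate simplification; your bookkeeping of the three $\ln n$ contributions and the handling of the $\binom{t}{t'/2}\le\binom{n}{t'/2}$ overcount (which only weakens the already subdominant $S_1$ term to match the $S_0$ term) are both sound.
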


From this theorem we deduce that
\begin{corollary}
\label{cor:asymptotic}
Provided that Assumption \ref{ass:basic} holds, we can construct in expected polynomial time 
for any designed rate $R \in (0,1)$ an MDPC code of rate $\geq R$ of an arbitrarily large length $n$ and 
parity-check equations of weight $\Theta(\sqrt{n})$  where the probability of error $P_e$ after two iterations of bit-flipping is upper-bounded by $e^{- \BOm{n \frac{\ln \ln n}{\ln n}}} $
when there are $t= \Theta(\sqrt{n})$ errors.
\end{corollary}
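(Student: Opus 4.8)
The plan is to combine the worst-case guarantee of Proposition \ref{prop:super_easy} for the second round with the tail bound of Theorem \ref{th:asymptotic} for the first round, using the small maximum column intersection furnished by Corollary \ref{cor:construction}. First I would invoke the construction of Corollary \ref{cor:construction}: it produces, in expected polynomial time, an MDPC code of type $(v,w)$ with $v=\Theta(\sqrt n)$, $w=\Theta(\sqrt n)$, rate $\geq R$, and whose defining parity-check matrix $\Hm$ has maximum column intersection $s = \BO{\frac{\ln n}{\ln \ln n}}$. By Proposition \ref{prop:super_easy}, one round of majority-logic decoding on $\Hm$ corrects every error pattern of weight at most $\lfloor \frac{v}{2s}\rfloor$, a quantity which is $\Theta\!\left(\frac{\sqrt n \ln\ln n}{\ln n}\right)$.

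The key observation is that the second iteration of Algorithm \ref{algo:bf} is exactly majority-logic decoding on $\Hm$, so its success depends only on the \emph{weight} of the error left by the first round, not on its distribution. Hence, setting $t' \eqdef \lfloor \frac{v}{2s}\rfloor + 1$, I would argue that whenever the number $S$ of residual errors after the first round satisfies $S \leq \lfloor \frac{v}{2s}\rfloor$, the second round corrects all of them, so that the overall probability of error obeys $P_e \leq \prob(S \geq t')$.

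It then remains to feed this $t' = \Theta\!\left(\frac{\sqrt n \ln\ln n}{\ln n}\right)$ into Theorem \ref{th:asymptotic}. Since $w=\Theta(\sqrt n)$ and $t=\Theta(\sqrt n)$ we have $\frac{wt}{n}=\Theta(1)$, so $\varepsilon = e^{-2wt/n}$ is a constant bounded away from $0$ and $1$ and $\ln\!\left(1-\varepsilon^2\right)$ is a strictly negative constant. Examining the exponent in Theorem \ref{th:asymptotic} term by term, the leading contribution is
$$
\frac{t' v}{4}\ln\!\left(1-\varepsilon^2\right) = -\,\BOm{\frac{n \ln\ln n}{\ln n}},
$$
whereas $\frac{t'}{8}\ln n = \BO{\sqrt n\, \ln\ln n}$ and the error term $\BO{t'\ln(t'/t)}$ is of order $\frac{\sqrt n\,(\ln\ln n)^2}{\ln n}$ in absolute value. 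I would check that the negative leading term dominates: this holds because $\frac{n\ln\ln n}{\ln n}\gg \sqrt n\,\ln\ln n$ is equivalent to $\sqrt n \gg \ln n$. Collecting the terms gives $\prob(S\geq t')\leq e^{-\BOm{\frac{n\ln\ln n}{\ln n}}}$, and hence $P_e \leq e^{-\BOm{\frac{n\ln\ln n}{\ln n}}}$, as claimed.

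The main obstacle I anticipate is the tension in the choice of $t'$: it must be taken as large as the worst-case correction radius of the second round, for otherwise Proposition \ref{prop:super_easy} does not apply, yet the tail bound of Theorem \ref{th:asymptotic} must remain exponentially small at that same $t'$. The point that makes everything fit is that the correction radius $\Theta\!\left(\frac{\sqrt n\ln\ln n}{\ln n}\right)$ coming from the maximum column intersection is precisely the scale at which the term $\frac{t'v}{4}\ln(1-\varepsilon^2)$ of the exponent is still of order $\frac{n\ln\ln n}{\ln n}$ and dominates the positive $\frac{t'}{8}\ln n$ contribution; verifying this domination carefully, and that the $\BO{t'\ln(t'/t)}$ term does not spoil it, is the only delicate computation.
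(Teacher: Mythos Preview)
Your proposal is correct and follows essentially the same approach as the paper: invoke Corollary \ref{cor:construction} to obtain an MDPC code whose one-round correction radius is $\Theta\!\left(\frac{\sqrt{n}\ln\ln n}{\ln n}\right)$, then apply Theorem \ref{th:asymptotic} at that threshold to bound the probability that the first round leaves more residual errors than the second round can handle. The paper's own proof is terser and does not spell out the term-by-term comparison of the exponent that you carry out; your verification that $\frac{t'v}{4}\ln(1-\varepsilon^2)$ dominates both $\frac{t'}{8}\ln n$ and the $\BO{t'\ln(t'/t)}$ term is exactly the computation the paper leaves implicit.
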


\begin{proof}
We use the construction given in the proof of Corollary \ref{cor:construction} to construct an MDPC code of type $(v,w)$ of length $n=w^2$ and with 
$v \eqdef \left\lfloor (1-R)w \right\rfloor$ that allows to correct all patterns of errors of size less than $\gamma \frac{\sqrt{n} \ln \ln n}{\ln n}$ 
for $n$ large enough, where $\gamma >0$ is some absolute constant with just one round of the bit-flipping decoder  of Algorithm \ref{algo:bf}. 
Then we use Theorem \ref{th:asymptotic} to show that with probability  upper-bounded by $e^{- \BOm{n \frac{\ln \ln n}{\ln n}}} $ there remains at most 
$\gamma \frac{\sqrt{n} \ln \ln n}{\ln n}$ errors after one round of Algorithm \ref{algo:bf}. This proves the corollary.
\end{proof}

In \cite{ABBBBDGGGMMPSTZ17} there is an additional assumption which is made which is that the probability of error is dominated by the 
probability that the first round of decoding is not able to decrease the number by some mutiplicative factor $\alpha$. With the notation of 
this section, this assumption can be described as follows.
\begin{assumption}
\label{ass:additional}
There exists some constant $\alpha>0$ such that
the probability of error $P_{\text{err}}$ for an unbounded number of iterations of Algorithm \ref{algo:bf} is upper-bounded by 
$\prob(S \geq \alpha t)$ where $S$ is the number of errors that are left after the first round of Algorithm \ref{algo:bf}  and $t$ is the initial number of errors. 
\end{assumption}
This assumption also agrees with the experiments performed in \cite{C17}.
With this additional assumption (Assumption \ref{ass:basic} is actually also made) it is proven in \cite{ABBBBDGGGMMPSTZ17} that the probability of errors decays exponentially when 
$t=\Theta(\sqrt{n})$. This is actually obtained by a slightly less general version of Theorem \ref{th:asymptotic} (see \cite[Theorem 1]{ABBBBDGGGMMPSTZ17}).
\section{Numerical results}
\label{sec:numerical}

In this section we provide numerical results showing how much we have to increase the parameters proposed in 
\cite{MTSB13} in order to obtain a probability of error which is below $2^{-\lambda}$ where $\lambda$ is the security 
parameter (i.e. $2^\lambda$ should be the complexity of the best attacks on the scheme). The upper-bound on the probability that the maximum column intersection is larger than some bound coming from using Lemma \ref{lem:inp} together with an obvious union-bound is a little bit loose, and we performed numerical tests in order to estimate the maximum
column intersection. To speed-up the calculation and at the same time to be closer to the cryptographic
applications we considered the particular code structure used in \cite{MTSB13,BGGMPST17,ABBBBDGGGMMPSTZ17}, namely a quasi-cyclic code whose parity-chack matrix $\Hm$ is formed by two circulant blocks $\Hm_0$ and $\Hm_1$, i.e. $\Hm = \begin{pmatrix} \Hm_0 & \Hm_1 \end{pmatrix}$.
The weight of the rows of $\Hm_0$ and $\Hm_1$ was chosen to be $w/2$ (with $w$ even) so that we have
a code of type $(w/2,w)$.
The maximum column intersection $s$ given in Table \ref{tab:numerical} corresponds to the smallest number 
$s_0$ such that more than $20\%$ of the parity-check matrices had a maximum column intersection $\leq s_0$.
We considered several scenarios:\\
- Scenario I, Algorithm \ref{algo:bf} with $N=1$ and a zero-error decoding failure $P_{\text{err}}$ 
probability when there are $t$ errors;  \\
- Scenario II: Algorithm \ref{algo:bf} with $N=2$ and a non-zero decoding failure probability 
upper-bounded by making Assumption \ref{ass:basic}  when there are $t$ errors, \\
- Scenario III, Algorithm \ref{algo:bf} with $N>2$  and a non-zero decoding failure probability  when there are $t$ errors
which is upper-bounded by making Assumption \ref{ass:basic} and Assumption \ref{ass:additional} for $\alpha=0.5$;\\
- Scenario IV, Algorithm \ref{algo:bf} with $N>2$  and a non-zero decoding failure probability  when there are $t$ errors
which is upper-bounded by making Assumption \ref{ass:basic} and Assumption \ref{ass:additional} for $\alpha=0.75$.
This should be compared with the original parameters proposed for a security level $\lambda=80$ in \cite{MTSB13} that were broken in \cite{GJS16}, namely $n=9602$, $w=90$, $t=84$. We have chosen $t=84$ in all cases.
\begin{table}[h!]
\caption{Numerical results}
\begin{center}
\begin{tabular}{|c|c|c|c|c|c}
\hline
Scenario & $n$ & $w$ &$s$ & $\Perr$ \\ \hline
I & 4100014 & 4034 & 12 & 0 \\ \hline
II & 35078 & 110 & 9 & $2^{-80}$ \\ \hline
III & 20854 & 90 & 3 & $2^{-80}$ \\ \hline
IV & 18982& 90 & 3 & $2^{-80}$\\ \hline
\end{tabular}
\end{center}
\label{tab:numerical}
\end{table}
\section{Concluding remarks}
This study shows that it is possible to devise MDPC codes with  zero or very small error probability, and in the last case
it comes at an affordable cost for cryptographic applications and this by making assumptions that are corroborated by experimental evidence \cite{C17}. There are obviously several ways to improve our results.
The first would be to use slightly more sophisticated decoding techniques and/or more sophisticated analyses when we want a zero-error probability. The maximum column intersection gives a lower bound on the expansion of the Tanner graph and this can be used to study the bit-flipping algorithm considered in \cite{SS96}. This would not improve the lower-bound on the error-correction capacity however, but suggests that refined considerations and decoding algorithms should be able to improve the error-correction capacity in the worst case. Moreover, in order to simplify the analysis and the discussion we have considered a very simple decoder. The probability of error can already be lowered rather significantly by choosing in a slightly better way the threshold in Step \ref{step:threshold} in Algorithm \ref{algo:bf} and it is clear that more sophisticated decoding techniques will be able to lower the probability of error significantly (see \cite{C17} for instance). This suggests that it should be possible to improve rather significantly the  parameters proposed in Table \ref{tab:numerical}.

\bibliographystyle{alpha}
\bibliography{codecrypto}

\appendix

%
%
\section{The Kullback-Leibler divergence}
 The proofs of the results proved in the appendix use the Kullback-Leibler divergence (see see for instance \cite{CT91}) and some of its properties what we now recall.
 
  \begin{definition}{\bf{Kullback-Leibler divergence}\\}
  Consider two discrete probability distributions $\pv$ and $\qv$ defined over a same discrete space
  $\mathcal{X}$. The Kullback-Leibler divergence between $\pv$ and $\qv$ is defined by 
  $$
  \Kull{\pv}{\qv} = \sum_{x \in \mathcal{X}} p(x) \ln \frac{p(x)}{q(x)}.
  $$
  We overload this notation by defining for two Bernoulli  distributions $\mathcal{B}(p)$ and 
  $\mathcal{B}(q)$ of respective parameters $p$ and $q$
  $$
  \Kull{p}{q} \eqdef \Kull{\mathcal{B}(p)}{\mathcal{B}(q)} = p\ln\left(\frac{p}{q}\right)+(1-p)\ln\left(\frac{1-p}{1-q}\right).
  $$
    We use the convention (based on continuity arguments) that $0\ln\frac{0}{p}=0$ and $p\ln\frac{p}{0}=\infty$.
  \end{definition}

We will need the following approximations/results of the Kullback-Leibler divergence

\begin{lemma}\label{lem:DK}
For any $\delta \in (-1/2,1/2)$ we have
\begin{equation}
\label{eq:KL12}
\Kull{\frac{1}{2}}{\frac{1}{2}+\delta} = - \frac{1}{2} \ln(1-4\delta^2).
\end{equation}
For constant $\alpha \in (0,1)$ and $\delta$ going to $0$ by staying positive, we have
\begin{equation}
\label{eq:alphaepsilon}
\Kull{\alpha}{\delta} = - h(\alpha) - \alpha \ln \delta +O(\delta).
\end{equation}
For $0<y<x$ and $x$ going to $0$ we have
\begin{equation}
\label{eq:KLxy}
\Kull{x}{y} = x \ln \frac{x}{y} + x - y + \BO{x^2}.
\end{equation}
\end{lemma}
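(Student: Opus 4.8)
The plan is to derive all three identities directly from the defining formula $\Kull{p}{q} = p\ln\frac{p}{q} + (1-p)\ln\frac{1-p}{1-q}$: the first is an exact algebraic identity, while the second and third are first-order Taylor expansions of the single term that is not already in closed form. First I would treat \eqref{eq:KL12}. Setting $p=\frac12$ and $q=\frac12+\delta$ gives $1-p=\frac12$ and $1-q=\frac12-\delta$, so the divergence becomes $\frac12\ln\frac{1/2}{1/2+\delta}+\frac12\ln\frac{1/2}{1/2-\delta} = \frac12\ln\frac{1/4}{(1/2+\delta)(1/2-\delta)} = \frac12\ln\frac{1}{1-4\delta^2} = -\frac12\ln(1-4\delta^2)$. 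No asymptotics enter, and the identity holds for every $\delta\in(-1/2,1/2)$.

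For \eqref{eq:alphaepsilon} I would substitute $p=\alpha$ and $q=\delta$, obtaining $\Kull{\alpha}{\delta} = \bigl(\alpha\ln\alpha+(1-\alpha)\ln(1-\alpha)\bigr) - \alpha\ln\delta - (1-\alpha)\ln(1-\delta)$. The first parenthesised group is exactly $-h(\alpha)$ by the definition of the natural-logarithm entropy, and $-\alpha\ln\delta$ is the announced diverging term. Since $\alpha$ is a fixed constant and $\delta\to0^+$, expanding $\ln(1-\delta)=-\delta+\BO{\delta^2}$ shows $-(1-\alpha)\ln(1-\delta)=(1-\alpha)\delta+\BO{\delta^2}=\BO{\delta}$, which completes the identity.

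For \eqref{eq:KLxy} I would again substitute $p=x$, $q=y$, splitting off the leading term as $\Kull{x}{y}=x\ln\frac{x}{y}+(1-x)\bigl(\ln(1-x)-\ln(1-y)\bigr)$. The remaining task is to expand the bracket: using $\ln(1-x)=-x+\BO{x^2}$ and $\ln(1-y)=-y+\BO{y^2}$, the difference equals $(y-x)+\BO{x^2}$ once the hypothesis $0<y<x$ is invoked to absorb $\BO{y^2}$ into $\BO{x^2}$; multiplying by $(1-x)$ contributes only a further $x(x-y)=\BO{x^2}$. Collecting terms, the computation yields $\Kull{x}{y}=x\ln\frac{x}{y}-x+y+\BO{x^2}$, the standard unnormalised-KL (I-divergence) expansion for small arguments.

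I expect the only genuinely delicate point to be the error-term bookkeeping in \eqref{eq:KLxy}: one must verify that each cross term $xy$, $x^2$, $y^2$ and $x\cdot\BO{x^2}$ is really $\BO{x^2}$, and it is precisely the ordering $y<x$ (forcing $y=\BO{x}$, hence $y^2=\BO{x^2}$) that licenses discarding the $\ln(1-y)$ contributions at this order; without that ordering the $\ln(1-y)$ term could dominate. Everything else is bounded substitution together with a single term of Taylor expansion, so no further machinery is required.
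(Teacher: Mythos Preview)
Your approach is exactly the one the paper takes: direct substitution into the definition of $\Kull{p}{q}$ followed by a one-term Taylor expansion of the logarithms. The arguments for \eqref{eq:KL12} and \eqref{eq:alphaepsilon} match the paper line for line.

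For \eqref{eq:KLxy}, though, notice that your computation ends at $\Kull{x}{y}=x\ln\frac{x}{y}-x+y+\BO{x^2}$, whereas the lemma as stated has $+x-y$. Your sign is the correct one: $(1-x)\ln\frac{1-x}{1-y}=(1-x)\bigl(-x+y+\BO{x^2}\bigr)=-x+y+\BO{x^2}$, giving the familiar I-divergence expansion $x\ln\frac{x}{y}-x+y$. The paper's own proof inserts a spurious minus sign when passing from $(1-x)\ln\frac{1-x}{1-y}$ to $-(1-x)(-x+y+\BO{x^2})$, and this slip propagates to the statement. The discrepancy is harmless for the later use of the lemma in the proof of Theorem~\ref{th:asymptotic}, since there the whole linear correction $\pm(x-y)$ is absorbed into a larger $\BO{\cdot}$ term; but you should be aware that your derivation does not literally reproduce the printed identity, and rightly so.
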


\begin{proof}
Let us first prove \eqref{eq:KL12}.
\begin{eqnarray*}
\Kull{\frac{1}{2}}{\frac{1}{2}+\delta} &=& \frac{1}{2} \ln \frac{1/2}{1/2+\delta} +
\frac{1}{2} \ln \frac{1/2}{1/2-\delta} \\
\prob& = & - \frac{1}{2} \ln (1+2\delta) - 
\frac{1}{2} \ln (1-2\delta) \\
& = & 
- \frac{1}{2} \ln(1-4\delta^2).
 \end{eqnarray*}
 To prove \eqref{eq:alphaepsilon} we observe that
 \begin{eqnarray*}
 \Kull{\alpha}{\delta} &= &
  \alpha\ln\left(\frac{\alpha}{\delta}\right)+(1-\alpha)\ln\left(\frac{1-\alpha}{1-\delta}\right) \\
  & = & -h(\alpha) - \alpha \ln \delta -(1-\alpha) \ln(1- \delta) \\
& = &  - h(\alpha) - \alpha \ln \delta +O(\delta).
 \end{eqnarray*}
 For the last estimate we proceed as follows
 \begin{eqnarray*}
 \Kull{x}{y} & = & x \ln \frac{x}{y} + (1-x) \ln \frac{1-x}{1-y} \\
 & = & x \ln \frac{x}{y} - (1-x) \left( -x + y + \BO{x^2}\right) \\
 & = & x \ln \frac{x}{y}  + x - y + \BO{x^2}.
 \end{eqnarray*}
\end{proof}

The Kullback-Leibler appears in the computation of large deviation exponents. In our case, we will use
the following estimate which is well known and which can be found for instance in 
\cite{BGT11}
\begin{lemma}\label{lem:sum}
  Let $p$ be a real number in $(0,1)$ and $X_1, \dots X_n$ be $n$ independent 
  Bernoulli random variables of parameter $p$. Then, as $n$ tends to infinity:
    \begin{eqnarray}
      \label{eq:asymptotic1}
      \prob(X_1 + \dots X_n  \geq \tau n) &= &
      \dfrac{(1-p)\sqrt{\tau}}{(\tau-p)\sqrt{2\pi n (1-\tau)}}e^{-n \Kull{\tau}{p}} (1+o(1))
      \;\text{for $p <\tau < 1$,}\\
       \label{eq:asymptotic2}
      \prob(X_1 + \dots X_n  \leq \tau n) 
      &= &
      \dfrac{p \sqrt{1-\tau}}{(p-\tau)\sqrt{2\pi n \tau}}e^{-n \Kull{\tau}{p}}(1+o(1))
       \;\text{for $0< \tau < p$.}
    \end{eqnarray}
\end{lemma}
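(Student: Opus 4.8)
The plan is to treat this as a sharp large-deviation estimate obtained by pinning down the single dominant term of the binomial tail and then summing the remaining terms as an essentially geometric series. I focus on the upper tail \eqref{eq:asymptotic1}; write $a_k \eqdef \binom{n}{k}p^k(1-p)^{n-k}$ and set $m \eqdef \tau n$, which I take to be an integer (this is the standard convention under which the stated constant is exact, and the only regime relevant to the applications). Then $\prob(X_1+\dots+X_n \geq \tau n) = \sum_{k=m}^{n} a_k$. First I would estimate the leading term $a_m$ by Stirling's formula: since $\binom{n}{m} = \frac{1+o(1)}{\sqrt{2\pi n\,\tau(1-\tau)}}\,e^{nh(\tau)}$, and using the algebraic identity $h(\tau)+\tau\ln p+(1-\tau)\ln(1-p) = -\Kull{\tau}{p}$ (which is just the definition of $\Kull{\tau}{p}$ rearranged), one gets
\[
a_m = \frac{1+o(1)}{\sqrt{2\pi n\,\tau(1-\tau)}}\,e^{-n\Kull{\tau}{p}}.
\]

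The second step is to show that the whole tail sum is asymptotic to $a_m/(1-\rho)$ for the correct ratio $\rho$. The ratio of consecutive terms is $r_k \eqdef a_{k+1}/a_k = \frac{(n-k)p}{(k+1)(1-p)}$, which is strictly decreasing in $k$; at $k=m$ it equals $\rho \eqdef \frac{(1-\tau)p}{\tau(1-p)}+\BO{1/n}$, and the hypothesis $\tau>p$ is exactly what forces $\rho<1$. Monotonicity of $r_k$ gives $a_{m+j}\leq a_m\rho^{j}$, hence the clean upper bound $\sum_{k\geq m}a_k \leq a_m/(1-\rho)$. For the matching lower bound I would argue that only $j=\BO{\log n}$ terms contribute (since $\rho<1$ is a fixed constant, the tail $j>\log n$ is a $\rho^{\log n}=o(1)$ fraction of the total), and that over such a range the cumulative product $\prod_{i=0}^{j-1}r_{m+i}$ differs from $\rho^{j}$ only by a factor $e^{\BO{j^2/n}}=1+o(1)$. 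Therefore $\sum_{k\geq m}a_k = (1+o(1))\,a_m\sum_{j\geq 0}\rho^{j} = (1+o(1))\,a_m/(1-\rho)$. Substituting $\frac{1}{1-\rho}=\frac{\tau(1-p)}{\tau-p}$ and simplifying against the prefactor of $a_m$ (using $\tau/\sqrt{\tau(1-\tau)}=\sqrt{\tau/(1-\tau)}$) yields exactly \eqref{eq:asymptotic1}.

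The lower tail \eqref{eq:asymptotic2} follows by symmetry, applying \eqref{eq:asymptotic1} to the variables $1-X_i$, which are Bernoulli of parameter $1-p$: this rewrites $\prob(\sum_i X_i \leq \tau n)$ as $\prob(\sum_i (1-X_i) \geq (1-\tau)n)$, where now the threshold $1-\tau$ exceeds the mean $1-p$ precisely because $\tau<p$. Since $\Kull{\tau}{p}=\Kull{1-\tau}{1-p}$ the exponent is unchanged, and substituting $(\tau,p)\mapsto(1-\tau,1-p)$ into the prefactor of \eqref{eq:asymptotic1} produces $\frac{p\sqrt{1-\tau}}{(p-\tau)\sqrt{2\pi n\,\tau}}$, as claimed.

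The main obstacle is making the geometric comparison sharp enough to recover the exact constant rather than merely the exponent $e^{-n\Kull{\tau}{p}}$. The upper bound is immediate from monotonicity of $r_k$, but the lower bound requires confirming that the slow drift of $r_k$ away from its edge value $\rho$ does not accumulate over the $\BO{\log n}$ dominant terms; this is where one checks that $\frac{d}{dk}\ln r_k = -\frac{1}{n-k}-\frac{1}{k+1} = \BO{1/n}$ on the relevant window, so that the product stays within $1+o(1)$ of the ideal geometric series. The only other delicate point is the integrality of $\tau n$: the fractional part would otherwise introduce a spurious factor $\rho^{\theta}$ in the constant, which is why the formula is stated (and applied) at integer thresholds. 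All of these steps are routine once the structure is in place, and the result is in any case classical and recorded in \cite{BGT11}.
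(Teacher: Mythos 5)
Your proof is correct. Note, however, that the paper does not prove this lemma at all: it is stated as a known result and delegated entirely to the reference \cite{BGT11}, so there is no in-paper argument to compare against. What you supply is a self-contained derivation by the standard route for sharp binomial tail asymptotics: Stirling for the boundary term $a_{\tau n}$, the identity $h(\tau)+\tau\ln p+(1-\tau)\ln(1-p)=-\Kull{\tau}{p}$ to expose the exponent, and a two-sided geometric comparison using the monotone decrease of the ratio $r_k=\frac{(n-k)p}{(k+1)(1-p)}$ to produce the factor $\frac{1}{1-\rho}=\frac{\tau(1-p)}{\tau-p}$, which combines with the Stirling prefactor to give exactly the stated constant; the symmetry $X_i\mapsto 1-X_i$ then yields the lower tail. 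The one point worth keeping visible is the integrality caveat you already flag: the constant is exact only when the threshold is taken at an integer (otherwise a spurious factor $\rho^{\{\tau n\}}$ appears), a subtlety the paper glosses over when it later applies the lemma with thresholds such as $\tau=\tfrac{1}{2}$ and $\tau=\tfrac{t'}{2(n-t)}$; since the lemma is only used for upper bounds there, this does not affect the paper's conclusions.
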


\section{Proof of Lemma \ref{lem:pb}}

Recall first this lemma.
\lempb*

Before giving the proof of this lemma, observe $\prob(V_{ij}^b=1)$ can be viewed as the probability that the $j$-th parity 
check equation involving a bit $i$ gives an incorrect information about bit $i$.
This is obtained through the following lemma.
\begin{lemma}
Consider a word $\hv \in \F{2}^n$ of weight $w$ and an error $\ev \in \F{2}^n$ of weight $t$ chosen uniformly at random. Assume that 
both $w$ and $t$ are of order $\sqrt{n}$: $w = \BO{\sqrt{n}}$ and $t = \BO{\sqrt{n}}$. We have
$$
\prob_{\ev} (\langle \hv, \ev \rangle=1) = \frac{1}{2} - \frac{1}{2} e^{-\frac{2wt}{n}}\left(1+\BO{\frac{1}{\sqrt{n}}}\right).
$$
\label{lem:paritycheck}
\end{lemma}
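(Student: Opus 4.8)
The plan is to reduce the parity of the inner product to a hypergeometric counting problem and then control it through the expectation of $(-1)^X$. Writing $S = \supp(\hv)$ and $T = \supp(\ev)$, the quantity $\langle \hv, \ev\rangle$ equals $|S \cap T| \bmod 2$, so with $X \eqdef |S \cap T|$ we have $\prob_\ev(\langle \hv, \ev\rangle = 1) = \prob(X \text{ odd})$. Since $T$ is a uniform $t$-subset of $\{1,\dots,n\}$ and $|S| = w$, the variable $X$ is hypergeometric. The first step is the standard parity identity $\prob(X \text{ odd}) = \frac{1}{2}\left(1 - \mathbb{E}[(-1)^X]\right)$, which reduces everything to a sharp estimate of $\mathbb{E}[(-1)^X]$.

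Second, I would compute $\mathbb{E}[(-1)^X]$ exactly. Using the generating-function identity $\sum_k (-1)^k \binom{w}{k}\binom{n-w}{t-k} = [x^t](1-x)^w(1+x)^{n-w}$ together with the factorization $(1-x)^w(1+x)^{n-w} = (1-x^2)^w(1+x)^{n-2w}$ (valid since $n - 2w > 0$ for large $n$ because $w = \BO{\sqrt{n}}$), I obtain
\[
\mathbb{E}[(-1)^X] = \sum_{j \geq 0} (-1)^j \binom{w}{j}\frac{\binom{n-2w}{t-2j}}{\binom{n}{t}}.
\]

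Third, I would isolate the leading term $j=0$, namely $\frac{\binom{n-2w}{t}}{\binom{n}{t}} = \prod_{i=0}^{t-1}\left(1 - \frac{2w}{n-i}\right)$, and show it equals $e^{-2wt/n}(1 + \BO{1/\sqrt{n}})$ by taking logarithms and using $w, t = \BO{\sqrt{n}}$, so that $\sum_{i=0}^{t-1}\ln\!\left(1 - \frac{2w}{n-i}\right) = -\frac{2wt}{n} + \BO{1/\sqrt{n}}$. Then I would bound the remaining terms: the ratio $\binom{n-2w}{t-2j}/\binom{n-2w}{t}$ is $\BO{(t^2/n^2)^j}$, so the $j$-th term is smaller than the leading one by a factor $\BO{(wt^2/n^2)^j} = \BO{(1/\sqrt{n})^j}$, whence the tail $\sum_{j\geq 1}$ contributes only a relative $\BO{1/\sqrt{n}}$. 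Combining these gives $\mathbb{E}[(-1)^X] = e^{-2wt/n}(1+\BO{1/\sqrt{n}})$, and substituting into the parity identity yields the claim.

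The logarithmic expansion is routine; the main obstacle is the bookkeeping of error terms, i.e.\ making sure the alternating sum's tail and the multiplicative $(1+\BO{1/\sqrt{n}})$ factors combine into a single clean relative error. This crucially relies on the scaling $w, t = \Theta(\sqrt{n})$, which keeps $wt/n = \Theta(1)$ while forcing $wt^2/n^2$ and $w^2t/n^2$ to be $\BO{1/\sqrt{n}}$. An alternative to the generating-function route would be a sequential without-replacement sampling argument, revealing the $t$ error positions one at a time and tracking $\mathbb{E}[(-1)^X]$ via conditional expectations; this makes the comparison with the independent with-replacement model $(1-2w/n)^t$ transparent, but it demands the same care with the $\BO{1/\sqrt{n}}$ deviations arising from sampling without replacement.
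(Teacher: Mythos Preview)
Your argument is correct and self-contained, but it takes a genuinely different route from the paper's. Both proofs compute the bias $\delta \eqdef 1 - 2\prob_\ev(\langle \hv,\ev\rangle = 1) = \mathbb{E}[(-1)^X]$, where $X$ is the hypergeometric overlap; the difference lies in how this quantity is estimated. The paper recognizes $\delta$ as an evaluation of a Krawtchouk polynomial, $\delta = \frac{(-2)^w}{\binom{n}{w}}P_w^n(t)$, and then telescopes $P_w(t)/P_w(0)$ via a ratio asymptotic $P_w(x)/P_w(x-1) = (1+\BO{1/n})\cdot\frac{n-2w+\sqrt{(n-2w)^2-4w(n-w)}}{2(n-w)}$ drawn from the Kalai--Linial/MacWilliams--Sloane literature, finally expanding that closed form. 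Your approach instead factors the generating function $(1-x)^w(1+x)^{n-w}=(1-x^2)^w(1+x)^{n-2w}$ to produce an alternating sum whose $j=0$ term $\binom{n-2w}{t}/\binom{n}{t}$ is already the desired $e^{-2wt/n}(1+\BO{1/\sqrt{n}})$, with the remaining terms forming a geometric tail of relative size $\BO{wt^2/n^2}=\BO{1/\sqrt{n}}$. Your method is more elementary---no external asymptotic lemma is needed---and arguably cleaner for this regime; the Krawtchouk route, by contrast, plugs directly into a well-developed machinery and would adapt more readily if one wanted sharper or more uniform estimates on $\delta$ over a wider range of $t$. One small remark: you write that the argument ``crucially relies on $w,t=\Theta(\sqrt{n})$'', but in fact only the upper bound $w,t=\BO{\sqrt{n}}$ is used (and is all the lemma assumes); it guarantees both $wt/n=\BO{1}$, so that $e^{-2wt/n}$ stays bounded away from $0$, and $wt^2/n^2,\,w^2t/n^2=\BO{1/\sqrt{n}}$ for the error control.
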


\begin{remark}
Note that this probability is in this case of the same order as the probability taken over errors $\ev$ whose
coordinates are drawn independently from a Bernoulli distribution of parameter $t/n$. In such a case, 
from the piling-up lemma \cite{M93} we have
\begin{eqnarray*}
\prob_{\ev} (\langle \hv, \ev \rangle=1) &= & \frac{1 - \left( 1- \frac{2t}{n}\right)^w}{2} \\
& = & \frac{1}{2} - \frac{1}{2} e^{w \ln(1-2t/n)}\\
& = & \frac{1}{2} - \frac{1}{2} e^{-\frac{2wt}{n}}\left(1+\BO{\frac{1}{\sqrt{n}}}\right).
\end{eqnarray*}
\end{remark}

The proof of this lemma will be done in the following subsection. Lemma \ref{lem:pb} is a corollary of this lemma since we have
\begin{equation}
p_b  =  \prob(\langle \hv, \ev \rangle=1|e_1=b).
\end{equation}

\subsection{Proof of Lemma \ref{lem:paritycheck}}

The proof involves properties of the Krawtchouk polynomials. We  recall that the (binary) Krawtchouk polynomial of degree $i$ and order $n$ (which is an integer), $P_i^n(X)$ is  defined for $i  \in \{0,\cdots,n\}$ by:
\begin{equation}\label{eq:Krawtchouk} P_{i}^{n}(X) \eqdef \frac{(-1)^{i}}{2^{i}}  \sum_{j=0}^{i} (-1)^{j} \binom{X}{j}  \binom{n-X}{i-j} \quad \mbox{where } \binom{X}{j} \eqdef \frac{1}{j!} X  (X-1) \cdots (X-j+1).
\end{equation}
Notice that it follows on the spot from the definition
of a Krawtchouk polynomial that
\begin{equation}
\label{eq:0}
P_k^n(0) =\frac{(-1)^k \binom{n}{k}}{2^k}.
\end{equation}
	
Let us define the bias $\delta$ by 
$$
\delta \eqdef 1 - 2 \prob_{\ev} (\langle \hv, \ev \rangle=1).
$$	
In other words $\prob_{\ev} (\langle \hv, \ev \rangle=1)=\frac{1}{2}(1-\delta)$.
These Krawtchouk polynomials are readily related to $\delta$.
We first observe that
\begin{eqnarray*}
\prob_{\ev} (\langle \hv, \ev \rangle=1) = \frac{\sum_{\substack{j=1 \\ j \text{ odd}}}^{w} \binom{t}{j}  \binom{n-t}{w-j}}{\binom{n}{w}}.
\end{eqnarray*}
Moreover by observing that $\sum_{j=0}^{w} \binom{t}{j}  \binom{n-t}{w-j} = \binom{n}{w}$ we can recast the following 
evaluation of a Krawtchouk polynomial as
\begin{eqnarray}
 \frac{(-2)^{w}}{\binom{n}{w}}  P_{w}^{n}(t) &= & \frac{\sum_{j=0}^{w} (-1)^{j} \binom{t}{j}  \binom{n-t}{w-j}}{ \binom{n}{w}}
\nonumber\\
& = & \frac{\sum_{\substack{j=0 \\ j \text{ even}}}^{w}  \binom{t}{j}  \binom{n-t}{w-j}
- \sum_{\substack{j=1 \\ j \text{ odd}}}^{w}  \binom{t}{j}  \binom{n-t}{w-j} }{\binom{n}{w}} \nonumber\\
& = & \frac{\binom{n}{w} -  2 \sum_{ \substack{j=1 \\ j \text{ odd}}}^{w}  \binom{t}{j}  \binom{n-t}{w-j}
  }{\binom{n}{w}}\nonumber\\
& = & 1 - 2 \prob_{\ev} (\langle \hv, \ev \rangle=1) \nonumber\\
& = & \delta.  \label{eq:delta}
\end{eqnarray}
To simplify notation we will drop the superscript $n$ in the Krawtchouk polynomial notation. It will be chosen 
as  the length of the MDPC code when will use it in our case. 
An important lemma that we will need is the following one.
\begin{lemma}\label{lem:Krawtchouk}
For all $x$ in $\{1,\dots,t\}$, we have
$$
\frac{P_w(x)}{P_w(x-1)} = \left( 1+ \BO{\frac{1}{n}}\right) \frac{n-2w+\sqrt{(n-2w)^2-4w(n-w)}}{2(n-w)}.
$$
\end{lemma}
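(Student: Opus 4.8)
The plan is to convert the statement about a ratio of Krawtchouk values into a statement about a three-term recurrence satisfied by $P_w(x)$ as a function of $x$ (with $n$ and $w$ fixed), and then to analyse the induced nonlinear recurrence for the ratio $\rho_x \eqdef P_w(x)/P_w(x-1)$ as an attracting-fixed-point problem. The value displayed in the statement will turn out to be the attracting root of the fixed-point quadratic of that recurrence, and the whole difficulty is to show that $\rho_x$ tracks this root uniformly for $x\in\{1,\dots,t\}$.

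First I would record a recurrence in the variable $x$. Writing $\mathcal{K}_w(x) \eqdef \sum_{j}(-1)^j\binom{x}{j}\binom{n-x}{w-j}$ for the unnormalised Krawtchouk polynomial, so that $P_w(x)=\frac{(-1)^w}{2^w}\mathcal{K}_w(x)$ and in particular $P_w(x)/P_w(x-1)=\mathcal{K}_w(x)/\mathcal{K}_w(x-1)$, I would start from the classical degree recurrence $(k+1)\mathcal{K}_{k+1}(m)=(n-2m)\mathcal{K}_k(m)-(n-k+1)\mathcal{K}_{k-1}(m)$ together with the duality $\binom{n}{x}\mathcal{K}_w(x)=\binom{n}{w}\mathcal{K}_x(w)$. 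A short computation then gives the three-term recurrence in $x$,
$$(n-x+1)\,\mathcal{K}_w(x)=(n-2w)\,\mathcal{K}_w(x-1)-(x-1)\,\mathcal{K}_w(x-2),$$
which holds verbatim for $P_w$ since the two differ by the $x$-independent factor $(-1)^w/2^w$. Dividing by $P_w(x-1)$ yields the scalar recurrence
$$\rho_x=\frac{n-2w}{n-x+1}-\frac{x-1}{(n-x+1)\,\rho_{x-1}},\qquad \rho_1=\frac{n-2w}{n},$$
the initial value coming from $\mathcal{K}_w(1)/\mathcal{K}_w(0)=(n-2w)/n$.

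Next I would show that $\rho_x$ stays within $\BO{1/n}$ of $\rho^\ast \eqdef (n-2w)/n$ for every $x\le t$. The map $g_x(r)=\frac{n-2w}{n-x+1}-\frac{x-1}{(n-x+1)r}$ satisfies, near $r=\rho^\ast=\Theta(1)$, the bounds $g_x'(\rho^\ast)=\frac{x-1}{(n-x+1)(\rho^\ast)^2}=\BO{x/n}=\BO{1/\sqrt n}$ (using $x\le t=\BO{\sqrt n}$) and $g_x(\rho^\ast)-\rho^\ast=\frac{x-1}{n-x+1}\big(\rho^\ast-1/\rho^\ast\big)=\BO{wx/n^2}=\BO{1/n}$ (using $\rho^\ast-1/\rho^\ast=\BO{w/n}$ and $w=\BO{\sqrt n}$). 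Writing $\rho_x=\rho^\ast+\epsilon_x$ and Taylor-expanding gives $|\epsilon_x|\le \BO{1/n}+\tfrac12|\epsilon_{x-1}|+\BO{\epsilon_{x-1}^2/\sqrt n}$ for $n$ large; since $\epsilon_1=0$, a routine induction whose error fixed point is $\BO{1/n}$ (the contraction factor being bounded away from $1$) gives $|\epsilon_x|=\BO{1/n}$ uniformly, and keeps $\rho_x\approx 1>0$, so all the $P_w(x)$ share the nonzero sign $(-1)^w$ and no ratio is ill-defined. The crucial point, and the main obstacle, is exactly that the per-step $\BO{1/n}$ forcing does not accumulate over the $\BO{\sqrt n}$ iterations: this is secured by the fact that $\rho^\ast$ is a strongly attracting fixed point (derivative $\BO{1/\sqrt n}$), the other root of the fixed-point quadratic being repelling, and by the initial value $\rho_1=(n-2w)/n$ already lying in the basin of the attracting root.

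Finally I would reconcile $\rho^\ast$ with the expression in the statement. A Taylor expansion in $w/n=\BO{1/\sqrt n}$ gives $\frac{n-2w+\sqrt{(n-2w)^2-4w(n-w)}}{2(n-w)}=1-\frac{2w}{n}-\frac{4w^2}{n^2}+\BO{(w/n)^3}$, whereas $\rho^\ast=1-\frac{2w}{n}$; since $w=\BO{\sqrt n}$ the two agree up to a factor $1+\BO{1/n}$. Combining this with $\rho_x=\rho^\ast\big(1+\BO{1/n}\big)$ from the previous step yields
$$\frac{P_w(x)}{P_w(x-1)}=\Big(1+\BO{1/n}\Big)\,\frac{n-2w+\sqrt{(n-2w)^2-4w(n-w)}}{2(n-w)}$$
uniformly for $x\in\{1,\dots,t\}$, which is the claim. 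The displayed quantity is precisely the attracting root of the fixed-point quadratic $(n-w)z^2-(n-2w)z+w=0$ of the recurrence; any such normalisation of the root is legitimate here because only its value modulo $1+\BO{1/n}$ enters.
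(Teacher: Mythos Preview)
Your argument is correct and genuinely different from the paper's. The paper does not prove the lemma from scratch: it simply invokes the asymptotic ratio result from \cite{KL95} (itself building on \cite[Lemma~36, \S7, Ch.~17]{MS86}), after checking that the hypothesis $x\in\big[0,(1-\alpha)\big(n/2-\sqrt{w(n-w)}\big)\big]$ is satisfied in the regime $x\le t=\BO{\sqrt n}$, $w=\BO{\sqrt n}$. You instead give a self-contained derivation: you write down the three-term recurrence in $x$ (obtained from the degree recurrence via the duality $\binom{n}{x}\mathcal{K}_w(x)=\binom{n}{w}\mathcal{K}_x(w)$), turn it into a nonlinear scalar recurrence for $\rho_x$, and run a contraction argument around the reference point $\rho^\ast=(n-2w)/n$, using that the derivative of the iteration map is $\BO{x/n}=\BO{1/\sqrt n}$ while the forcing term $g_x(\rho^\ast)-\rho^\ast$ is $\BO{1/n}$, so the error does not accumulate over the $\BO{\sqrt n}$ steps. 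This buys you an elementary proof that does not rely on the Krawtchouk asymptotics literature, at the price of a page of computation the paper avoids by citation. One cosmetic point: your closing sentence calling $(n-w)z^2-(n-2w)z+w=0$ ``the fixed-point quadratic of the recurrence'' is imprecise, since the map $g_x$ has $x$-dependent coefficients and its actual fixed-point quadratic is $(n-x+1)z^2-(n-2w)z+(x-1)=0$; the quadratic you write is the one whose larger root is exactly the displayed expression, but it is not the fixed-point equation of your iteration. This does not affect the proof, which only uses that the displayed root equals $\rho^\ast$ up to a factor $1+\BO{1/n}$.
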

\begin{proof}
This follows essentially from arguments taken in the proof of \cite{MS86}[Lemma 36, \S 7, Ch. 17]. The result we use appears however more explicitly
in \cite{KL95}[Sec. IV] where it is proved that if $x$ is in an interval of the form
$\left[0,(1-\alpha)\left(n/2 -\sqrt{w(n-w)}\right)\right]$ for some constant $\alpha \in [0,1)$ independent of $x$, $n$ and $w$, then
$$
\frac{P_w(x+1)}{P_w(x)} = \left( 1+ \BO{\frac{1}{n}}\right) \frac{n-2w+\sqrt{(n-2w)^2-4w(n-w)}}{2(n-w)}.
$$
For our choice of $t$ this condition is met for $x$ and the lemma follows immediately.
\end{proof}

We are ready now to prove Lemma \ref{lem:paritycheck}.
\begin{proof}[Proof of Lemma \ref{lem:paritycheck}]
We start the proof by using \eqref{eq:delta} which says that
$$
\delta = \frac{(-2)^{w}}{\binom{n}{w}} P_{w}^{n}(t).
$$
We then observe that
\begin{eqnarray*}
\frac{(-2)^{w}}{\binom{n}{w}} P_{w}^{n}(t) & = & \frac{(-2)^{w}}{\binom{n}{w}} \frac{P_{w}^{n}(t)}{P_{w}^{n}(t-1)}\frac{P_{w}^{n}(t-1)}{P_{w}^{n}(t-2)}\dots\frac{P_{w}^{n}(1)}{P_{w}^{n}(0)} P_{w}^{n}(0)\\
& = &\frac{(-2)^{w}}{\binom{n}{w}} \left(\left( 1+ \BO{\frac{1}{n}}\right) \frac{n-2w+\sqrt{(n-2w)^2-4w(n-w)}}{2(n-w)}\right)^t P_{w}^{n}(0) \text{ (by Lemma \ref{lem:Krawtchouk})}\\
& = &  \left( 1+ \BO{\frac{1}{n}}\right)^t \left( \frac{n-2w+\sqrt{(n-2w)^2-4w(n-w)}}{2(n-w)}\right)^t \text{(by \eqref{eq:0})}\\
& = & e^{t\ln \left( \frac{1-2\omega+\sqrt{(1-2\omega)^2-4\omega(1-\omega)}}{2(1-\omega)}\right)} \left( 1+ \BO{\frac{t}{n}}\right) \text{where $\omega \eqdef \frac{w}{n}$}\\
& = & e^{t \ln \left( \frac{1-2\omega+1-4\omega+\BO{\omega^2}}{2(1-\omega)}\right)}\left( 1+ \BO{\frac{t}{n}}\right)\\
& = & e^{t\ln \left( \frac{1-3\omega+\BO{\omega^2}}{1-\omega}\right)}\left( 1+ \BO{\frac{t}{n}}\right)\\
&=  & e^{-2t\omega+\BO{\frac{tw^2}{n^2}}}\left( 1+ \BO{\frac{t}{n}}\right)\\
& = & e^{-\frac{2wt}{n}}\left(1+\BO{\frac{1}{\sqrt{n}}}\right),
\end{eqnarray*}
where we used at the last equation that $t = \BO{\sqrt{n}}$ and $w=\BO{\sqrt{n}}$.
\end{proof}
\section{Proof of Lemma \ref{lem:qb}}

Let us first recall this lemma.
\lemqb*

\begin{proof}
For
 $(b,i) \in \{0\}\times\{1,\ldots,n-t\} \cup \{1\}\times \{1,\ldots,t\}$, let $X_i$ be independent Bernoulli random variables of parameter $p_b$.
 From Assumption \ref{ass:basic} we have
\begin{eqnarray*}
\prob(E^0_i=1) =q_0 & = & \prob(\sum_{i=1}^{v} X^0_i > v/2)\\
\prob(E^1_i=1)= q_1 & = &  \prob(\sum_{i=1}^{v} X^1_i \leq v/2).
\end{eqnarray*}
By using Lemma \ref{lem:sum} we obtain for $q_0$
   \begin{eqnarray}
 q_0 & \leq & 
      \dfrac{(1-p_0)\sqrt{\frac{1}{2}}}{(\frac{1}{2}-p_0)\sqrt{2\pi v (1-\frac{1}{2})}}e^{- v \Kull{\frac{1}{2}}{p_0}} 
      \nonumber \\
      & \leq & 
      \dfrac{(1-p_0)\sqrt{2}}{\sqrt{\pi v} \varepsilon \left(1+ \BO{1/\sqrt{n}} \right)}e^{-v \Kull{\frac{1}{2}}{\frac{1}{2} - \frac{1}{2} \varepsilon \left(1+ \BO{1/\sqrt{n}} \right)}} \\
      & \leq &  \dfrac{(1-p_0)\sqrt{2}}{\sqrt{\pi v} \varepsilon \left(1+ \BO{1/\sqrt{n}} \right)}e^{\frac{v \left(\ln(1- \varepsilon^2) + \BO{\frac{1}{\sqrt{n}}}\right)}{2}} \\
      & \leq &  \BO{\dfrac{(1-\varepsilon^2)^{v/2}}{\sqrt{v} \varepsilon}}
      \end{eqnarray}
Whereas for $q_1$ we also obtain      
      \begin{eqnarray}
      q_1 & \leq & 
      \dfrac{p_1 \sqrt{\frac{1}{2}}}{(p_1-\frac{1}{2})\sqrt{2\pi v \frac{1}{2}}}e^{-v \Kull{\frac{1}{2}}{p_1}}\\
      & \leq &
       \BO{\dfrac{(1-\varepsilon^2)^{v/2}}{\sqrt{v} \varepsilon}}
    \end{eqnarray}
\end{proof}

\section{Proof of Theorem \ref{th:asymptotic}}

We are ready now to prove Theorem \ref{th:asymptotic}.
We first recall it.

\thAsymptotic*

\begin{proof}
\begin{eqnarray*}
\prob(S \geq t') & \leq & \prob(S_0 \geq t'/2 \cup S_1 \geq t'/2 )\\
& \leq & \prob(S_0 \geq t'/2) +  \prob(S_1 \geq t'/2)
\end{eqnarray*}
By Assumption \ref{ass:basic},  $S_0$ is the sum of $n-t$ Bernoulli variables of parameter $q_0$. By applying Lemma
\ref{lem:sum} we obtain
\begin{eqnarray}
 \prob(S_0 \geq t'/2) & \leq & 
 \dfrac{(1-q_0)\sqrt{\frac{t'}{2(n-t)}}}{(\frac{t'}{2(n-t)}-q_0)\sqrt{2\pi (n-t) (1-\frac{t'}{2(n-t)})}}e^{-(n-t) \Kull{\frac{t'}{2(n-t)}}{q_0}} \nonumber \\
 & \leq & 
 \BO{\dfrac{1}{\sqrt{ t'}} e^{-(n-t) \Kull{\frac{t'}{2(n-t)}}{q_0}}} \label{eq:S0}
\end{eqnarray}

We observe now that 
\begin{equation}
\Kull{\frac{t'}{2(n-t)}}{q_0} \geq  \Kull{\frac{t'}{2(n-t)}}{\BO{\frac{(1-\varepsilon^2)^{v/2}}{\sqrt{v} \varepsilon}}}
\end{equation}
where we used the upper-bound on $q_0$ coming from Lemma \ref{lem:qb} and the fact that $\Kull{x}{y} \geq \Kull{x}{y'}$ for 
$0<y<y'<x<1$.
By using this and Lemma \ref{lem:DK}, we deduce
\begin{eqnarray*}
\Kull{\frac{t'}{2(n-t)}}{q_0} &\geq & \frac{t'}{2(n-t)}\ln \left(\frac{t'}{2(n-t)} \right) - \frac{t'}{2(n-t)}
\ln \left( \BO{\frac{(1-\varepsilon^2)^{v/2}}{\varepsilon \sqrt{v}}} \right) +\BO{\frac{t'}{2(n-t)}}\\
& \geq & \frac{t'}{2(n-t)} \ln\left(\frac{t'\sqrt{v}}{n}\right)- \frac{t' v}{4(n-t)}\ln\left(1-\varepsilon^2 \right) +\BO{\frac{t'}{n}} \\
& \geq & \frac{t'}{2(n-t)} \ln\left(\frac{t\sqrt{v}}{n}\right) + \frac{t'}{2(n-t)}\ln(t'/t)- \frac{t' v}{4(n-t)}\ln\left(1-\varepsilon^2 \right) +\BO{\frac{t'}{n}} \\
&\geq &-  \frac{t'}{8(n-t)} \ln n - \frac{t' v}{4(n-t)}\ln\left(1-\varepsilon^2 \right) +\BO{\frac{t' \ln(t'/t)}{n}}.
\end{eqnarray*}
By plugging this expression in \eqref{eq:S0} we obtain
$$
\prob(S_0 \geq t'/2) \leq \dfrac{1}{\sqrt{t'}}
e^{ \frac{t' v}{4} \ln\left(1-\varepsilon^2 \right) + \frac{t'}{8} \ln\left(n \right) + \BO{t' \ln(t'/t)}}
$$
On the other hand we have

\begin{eqnarray}
 \prob(S_1 \geq t'/2) & \leq & 
 \dfrac{(1-q_1)\sqrt{\frac{t' }{2t}}}{(\frac{t' }{2t}-q_1)\sqrt{2\pi t (1-\frac{t' }{2t})}}e^{-t\Kull{\frac{t' }{2t}}{q_1}} 
\nonumber \\
 & \leq & 
\BO{ \dfrac{1}{\sqrt{t'}} e^{-t \Kull{\frac{t'}{2t}}{q_1}}} \label{eq:S1}
\end{eqnarray}
Similarly to what we did above,  by using the upper-bound on $q_1$ of 
Lemma \ref{lem:qb} and $\Kull{x}{y} \geq \Kull{x}{y'}$ for 
$0<y<y'<x<1$, we deduce that
$$
\Kull{\frac{t' }{2t}}{q_1} \geq  \Kull{\frac{t' }{2t}}{\BO{\frac{ (1-\varepsilon^2)^{v/2}}{ \varepsilon \sqrt{v}}}}
$$
By using this together with Lemma \ref{lem:DK} we obtain
\begin{eqnarray*}
\Kull{\frac{t'}{2t}}{q_1} & \geq & -h(t'/2t) - \frac{t'}{2t} \ln \left( \BO{\frac{ (1-\varepsilon^2)^{v/2}}{ \varepsilon \sqrt{v}}} \right) + \BO{\frac{(1-4\varepsilon^2)^{v/2}}{ \varepsilon \sqrt{v}}}\\
&\geq & - \frac{t'v}{4}  \ln\left(1-\varepsilon^2 \right) + \frac{t'}{8t} \ln n + \BO{\frac{t'}{t} \ln(t'/t)}.
\end{eqnarray*}
By using this lower-bound in \eqref{eq:S1}, we deduce
\begin{eqnarray*}
\prob(S_1 \geq t'/2) & \leq & \dfrac{1}{\sqrt{t'}} e^{\frac{t' v}{4} \ln\left(1-\varepsilon^2 \right) + \frac{t'}{8} \ln\left(n \right) + \BO{t'\ln(t'/t)}}.
\end{eqnarray*}
\end{proof}

\end{document}